\newcommand{\Q}{\mathcal Q}
\newcommand{\G}{\mathcal G}
\newcommand{\M}{\mathcal M}
\newcommand{\E}{\mathcal E}
\newcommand{\F}{\mathcal F}
\newcommand{\J}{\mathcal J}
\newcommand{\Sq}{\mathcal{S}_{\mathcal{Q}}}
\newcommand{\LCM}{\nabla}
\newcommand{\LCS}{\mathcal D}
\newcommand{\SFS}{\chi}
\newcommand{\x}{{\bm x}}
\newcommand{\z}{{\bm z}}
\newcommand{\y}{{\bm y}}
\newcommand{\f}{{\bm f}}
\newcommand{\w}{{\bm w}}
\newcommand{\e}{{\bm e}}
\newcommand{\chQ}{{\mathfrak{q}}}
\newcommand{\chP}{{\mathfrak{p}}}
\newcommand{\chSq}{{\mathfrak{e}^2}}
\newcounter{example}[section]
\newcounter{remark}[section]
\newcounter{theorem}[section]
\newcounter{proposition}[section]
\newcounter{lemma}[section]
\newcounter{corollary}[section]
\newcounter{definition}[section]
\def\theremark{\arabic{section}.\arabic{remark}}
\def\thetheorem{\arabic{section}.\arabic{theorem}}
\def\thedefinition{\arabic{section}.\arabic{definition}}
\renewcommand*{\email}[1]{\footnote{Electronic address: \href{mailto:#1}{\nolinkurl{#1}} }}
\newenvironment{proof}{\noindent {\textit{Proof:}}
}{\medskip}
\newenvironment{theorem}{\refstepcounter{theorem}
\medskip\noindent{\bf Theorem \thetheorem}:\em}{\medskip}
\newenvironment{definition}{\refstepcounter{definition}\medskip\noindent{\bf
Definition \thedefinition}:\em}{\medskip}
\begin{document}

\title{Glued massive particles surfaces}
\author{Igor Bogush\email{igbogush@gmail.com}}
\author{Kirill Kobialko\email{kobyalkokv@yandex.ru}}
\author{Dmitri Gal'tsov\email{galtsov@phys.msu.ru}}
\affiliation{Faculty of Physics, Moscow State University, 119899, Moscow, Russia}

\begin{abstract}
A novel generalization of photon surfaces  to the case of massive charged particles is given for spacetimes with at least one isometry, including stationary ones. A related notion of {\em glued} massive particle surfaces is also defined. These surfaces join worldlines parametrized by a family of independent conserved quantities and naturally arise in integrable spacetimes. 
We describe the basic geometric properties of such surfaces and their relationship to slice-reducible Killing tensors,  illustrating all concepts with a number of examples. Massive particle surfaces have potential applications in the context of uniqueness theorems, Penrose inequalities, integrability, and the description of black-hole shadows in streams of massive charged particles or photons in a medium with an effective mass and charge. 
\end{abstract}

\maketitle

\setcounter{page}{2}

\setcounter{equation}{0}
\setcounter{subsection}{0}

\section{Introduction}

A new observation window for studying extremely strong gravity near black holes, opened by the  Event Horizon Telescope collaboration \cite{EventHorizonTelescope:2022xnr,EventHorizonTelescope:2019dse}, has stimulated the development of new theoretical tools in the field of gravity, based on the study of compact two-dimensional surfaces in space and the corresponding three-dimensional hypersurfaces in spacetime, in which photons can travel forever \cite{Claudel:2000yi,Koga:2020akc,Cao:2019vlu,Shoom:2017ril,Gibbons:2016isj}. Well-known examples are the photon sphere in the Schwarzschild metric \cite{Virbhadra:1999nm,Virbhadra:2002ju} or spherical surfaces in Kerr spacetime, on which non-equatorial spherical photon orbits wind \cite{Teo:2020sey}. The corresponding hypersurfaces in spacetime can be identified in terms of the geometric theory of submanifolds as three-dimensional umbilical or partially umbilical hypersurfaces \cite{Kobialko:2020vqf,Kobialko:2021uwy}.

The umbilical surface has the proportionality property of the induced metric tensor and the extrinsic curvature tensor, while the partially umbilical condition limits the equality of the first and second fundamental forms not for all tangent vectors, but only for some subspace of the tangent space. In both cases, one can describe such surfaces without resorting directly to the geodesic equations. These timelike surfaces capturing photons have been called fundamental photon surfaces \cite{Kobialko:2020vqf}. Their general properties are interesting not only for understanding the shadows of black holes \cite{Perlick:2021aok,Bronzwaer:2021lzo,GrenzebachSBH,Dokuchaev:2020wqk,Dokuchaev:2019jqq,Cunha:2018acu,Bogush:2022hop,Konoplya:2021slg}, but also as theoretical tools for studying the stability of spacetimes, hidden symmetries \cite{Cariglia:2014ysa,Frolov:2017kze,Gibbons:2011hg} expressed by exact and conformal Killing tensors of the second rank \cite{Kobialko:2021aqg,Kobialko:2022ozq}, Penrose-type inequalities for regions of strong gravity \cite{Shiromizu:2017ego,Feng:2019zzn,Yang:2019zcn,Yoshino:2019dty,Yoshino:2017gqv}, uniqueness theorems \cite{Cederbaum,Yazadjiev:2015hda,Yazadjiev:2015mta,Yazadjiev:2015jza,Yoshino:2016kgi,Yazadjiev:2021nfr,Koga:2020gqd,Rogatko,Cederbaumo,Cederbaum:2019rbv}, etc.  They can exist not only around black holes, but also around wormholes or naked singularities, and they often open a way to distinguish these different ultra-compact objects.

A natural generalization of such characteristic surfaces is to consider timelike orbits of massive particles, as suggested in our previous paper \cite{Kobialko:2022uzj} and in parallel in Ref. \cite{Song:2022fdg} with slightly different perspectives. Note that this also describes the properties of photons in a medium, such as plasma, which may have variable effective mass \cite{Perlick:2017fio} and charge \cite{Asenjo:2010zz,Mendonca:2000tk}. Surfaces that capture massive particles \cite{Teo:2020sey,Hackmann:2013pva} share some common features with photon surfaces, but there are also important differences. The main difference with photons is that massive particles can move at an arbitrary speed, while photons have a fixed speed. For this reason, photon surfaces are more rigid and form a one-parameter family described by their radius only, while in Kerr metric the spherical orbits of massive particles form a {\em two-parameter family}: they can be parameterized by their radius and the value of the Carter constant \cite{Teo:2020sey}. Another important difference is that all spherical orbits of photons in Kerr are unstable, while timelike spherical Kerr geodesics in different parameter ranges can be both unstable and stable. As a result, a direct formal generalization of photon surfaces on the massive particle surfaces is insufficient and requires further ramifications. This is the purpose of this article, in which we introduce a new concept of ``glued'' surfaces, covering their internal splitting in the parameter space.

The article is organized as follows. In Sec. \ref{sec:setup}, we briefly discuss the equations of motion for charged massive particles in spacetimes with several Killing vectors. We also cover the geometric aspects of particle velocity space, conventions of hypersurface geometry, and motivation for further generalization of massive particle surfaces. In Sec. \ref{sec:MPS}, we provide a more general and strict definition of massive particle surfaces, present a key theorem, and discuss the physical and geometric properties of these surfaces. We analyze the resulting geometric constraints, resolve them explicitly, determine the degree of arbitrariness of these surfaces, and discuss surfaces with all Killing vectors tangent to them. In Sec. \ref{sec:GPS}, we define and analyze the effect of gluing of massive particle surfaces, a characteristic feature of integrable systems such as Kerr-like geometries. We examine the phenomenon of maximum gluing and its important geometric consequences. In Sec. \ref{sec:KT} we discuss the close connection between massive particle surfaces and gluing phenomenon in relation to the integrability of dynamical systems and the existence of slice-reducible conformal Killing tensor fields of rank two \cite{Kobialko:2021aqg,Kobialko:2022ozq}. Finally, in Sec. \ref{sec:Examples}, we apply the developed formalism to various important examples, such as the Kerr metric, Zipoy-Voorhees solution, and the dyonic Kerr-Newman solution. The appendix contains proofs of some statements formulated in the main part of the paper. We use symmetrization and antisymmetrization with weight one, e.g., $A_{[\alpha\beta]} = A_{\alpha\beta} - A_{\beta\alpha}$.

\section{Setup} 
\label{sec:setup}

\subsection{Killing vectors}

Let $M$ be a Lorentzian manifold of dimension $n$ with metric tensor $g_{\alpha\beta}$ and Levi-Civita connection $\LCM_\alpha$. We define electromagnetic potential $A_{\alpha}$ and electromagnetic field tensor $F_{\alpha\beta}=\LCM_{[\alpha} A_{\beta]}$ in $M$. The worldline $\gamma^\alpha$ of a test particle with charge $q$ and mass $m$ in this geometry obeys the following equations of motion
\begin{equation} 
   \dot{\gamma}^\alpha \LCM_{\alpha}\dot{\gamma}^\beta =q F^\beta{}_{\lambda}\dot{\gamma}^\lambda , \quad \dot{\gamma}^\alpha\dot{\gamma}_\alpha=-m^2,  \label{eq_particles}
\end{equation} 
where $\dot{\gamma}^\alpha=d \gamma^\alpha/d s$ is a four-velocity of the particle, and $s$ is an affine parameter. One can also use these equations for photons in a medium that can be described by an effective mass $m$ and charge $q$ \cite{Perlick:2017fio,Asenjo:2010zz,Mendonca:2000tk}, which can be a function of coordinates.

Consider a metric $g_{\alpha\beta}$ and an electromagnetic potential $A_\alpha$ that share the same symmetry with respect to Killing vector fields $k_a{}^\alpha$, where the Latin index $a$ enumerates the Killing vectors \cite{Kobialko:2022ozq}. Each Killing vector field satisfies Killing equations $\LCM_{(\alpha}k_a{}_{\beta)}=0$. This implies the existence of conserved quantities $Q_a$ defined as (see Ref. \cite{Kobialko:2022uzj} for details)
\begin{equation}\label{energy}
Q_a\equiv k_{a\alpha}\left(\dot{\gamma}^\alpha+ q A^\alpha\right).
\end{equation}
It is also useful to consider two terms in the expression (\ref{energy}) separately, introducing the ``kinetic'' and ``potential'' components of the conserved quantity $Q_a = q_{a} + q p_{a}$:
\begin{align}
p_a\equiv k_{a\alpha} A^\alpha,\qquad
q_a\equiv Q_a - qp_a.
\end{align}
In the general case, $q_a$ and $p_a$ are not conserved separately. The potential component is a predefined function for given $k_a{}^\alpha$ and $A^\alpha$. The kinetic component $q_a$ can be considered as a secondary quantity which is a certain function of coordinates for fixed conserved quantities $Q_a$. On the other hand, it can be represented as a scalar product of the Killing vector $k_a{}^\alpha$ with some properly normalized timelike (for $m\neq 0$, null for $m=0$) vector $v^\alpha$:
\begin{align} \label{eq:constraint} 
    k_{a\alpha} v^\alpha=q_a=Q_a-qp_a, \quad v_\alpha v^\alpha=-m^2.
\end{align}
Any possible worldline of a particle with fixed $Q_a$, mass $m$, and charge $q$ passing through a given point in spacetime has a 4-velocity coinciding with some vector $v^\alpha$ subject to the constraint (\ref{eq:constraint}).

And vice versa, for any vector $v^\alpha$ obeying constraint (\ref{eq:constraint}) there is a worldline of a particle with fixed $Q_a$, $m$ and $q$, such that its four-velocity coincides with $v^\alpha$. In other words, there is a bijective map between the set of all $v^\alpha$ constrained by (\ref{eq:constraint}) and the set of all charged particle worldlines passing through the given point with fixed $Q_a$, $m$ and $q$ (see details in Ref. \cite{Kobialko:2020vqf} for the massless case).

In what follows, we will assume that the Gram matrix $G_{ab}=k_{a}{}^\beta k_{b}{}_\beta$ of Killing vectors $k_a{}^\alpha$ is non-degenerate\footnote{This condition can be relaxed by using pseudo-inverse matrices. However, this could make the present work unnecessarily technical, so we will not consider the Gram matrix to be pseudo-inverse.}. In particular, all Killing vectors are linearly independent and the inverse matrix $G^{ab}$ is well-defined by the condition $G_{ac}G^{cb}=\delta^a_b$. According to intermediate results (\ref{eq:AB_representation}) and (\ref{eq:solution_A}) of theorem \ref{th:solutions} from App. \ref{sec:appendix_theorems}, the general solution of system (\ref{eq:constraint}) has the form
\begin{align}  \label{eq:constraint_solution} 
 v^\alpha=k_{a}{}^\alpha G^{ab} q_b  +u^\alpha, \quad k_{a}{}_\alpha u^\alpha=0,  \quad u^\alpha u_\alpha=-m^2-G^{ab}q_a q_b. 
\end{align}
Vectors $u^\alpha$ are arbitrary vectors with a fixed norm that are orthogonal to all Killing vectors $k_a{}^\alpha$. The tangent space splits into a direct sum of the subspace generated by the Killing vectors and their orthogonal complement. Since the Gram matrix $G_{ab}$ is not degenerate, these subspaces can have Lorentzian or  Euclidean signatures (but none of them are null). If the orthogonal complement has a Euclidean signature, then it follows from (\ref{eq:constraint_solution}) that the solution exists only if 
\begin{align}  \label{eq:accessible_area} 
G^{ab}q_a q_b\leq -m^2.
\end{align}
If the orthogonal complement has a Lorentzian signature or consists of one timelike vector, there are no restrictions \cite{Kobialko:2022uzj}. In the classical Hamiltonian approach, inequality (\ref{eq:accessible_area}) follows from the effective potential descriptions \cite{Cunha:2018acu}. 

If the orthogonal complement to all Killing vectors has a Euclidean signature, the particle with the given conserved quantities can move only in a certain domain of the spacetime given by inequality (\ref{eq:accessible_area}). In order to describe this domain, we introduce the following definition:

\begin{definition} 
The $\Q$-domain of motion $\M_\Q\subseteq M$ is a maximal subset of $M$ where the system (\ref{eq:constraint}) has a solution.
\end{definition}

From the physical point of view, the motion along the worldlines with given values of the conserved quantities is possible only in a $\Q$-domain of motion $\M_\Q$. A boundary $\partial \M_\Q$ of the $\Q$-domain serves as a set of turning points for the corresponding worldlines. The boundaries correspond to saturation of the inequality (\ref{eq:accessible_area}) if the latter is applicable. Thus, $u^\alpha=0$ and worldlines $\gamma$ are tangent to the Killing vectors at the boundaries. In the particular case $m=0$, it is easy to check that if $\M_\Q$ is connected, it is nothing but the ``causal $\rho$-region'' defined in Ref. \cite{Kobialko:2020vqf} for null geodesics.

\subsection{Hypersurface geometry}

Let $S$ be a timelike hypersurface without boundaries with normal unit vector $n^\alpha$. The main geometric properties of $S$ are specified by the induced metric $h_{\alpha\beta}$ and the second fundamental form $\chi_{\alpha\beta}$:
\begin{equation} \label{eq:fundamental_form}
   h_{\alpha\beta} =
    g_{\alpha\beta}
    -n_\alpha n_\beta, \quad  \SFS_{\alpha\beta}
    \equiv
    h^{\lambda}_{\alpha}
    h^{\rho}_{\beta}
    \LCM_{\lambda} n_{\rho}.
\end{equation}
The Gauss decomposition for two vector fields $v^\alpha$, $u^\alpha$ tangent to $S$ (i.e., $v^\alpha n_\alpha=u^\alpha n_\alpha=0$) reads
\begin{equation} \label{eq:Gauss_decomposition}
    v^\alpha \LCM_\alpha u^\beta =
      v^\alpha \LCS_\alpha u^\beta
    - \SFS_{\alpha\gamma} v^\alpha u^\gamma n^\beta,
\end{equation}
where $\LCS_\alpha$ is  Levi-Civita connection on $S$, defined as  \cite{Kobialko:2022uzj,Kobialko:2022ozq}:
\begin{equation} \label{eq:form.projector_2}
    \LCS_\alpha \mathcal{T}^{\beta\ldots}_{\gamma\ldots} =
    h^{\lambda}_{\alpha} h^{\beta}_{\rho}\ldots  h_{\gamma}^{\tau}\ldots \LCM_\lambda \mathcal{T}^{\rho\ldots}_{\tau\ldots}, 
    \qquad
    \mathcal{T}^{\beta\ldots}_{\gamma\ldots}=h^{\beta}_{\rho}\ldots  h_{\gamma}^{\tau}\ldots T^{\rho\ldots}_{\tau\ldots}.
\end{equation}

An important observation is that the Gauss decomposition (\ref{eq:Gauss_decomposition}) allows us to reduce the problem of analyzing the behavior of worldlines (\ref{eq_particles}) to the analysis of hypersurface fundamental forms (\ref{eq:fundamental_form}). To illustrate this idea, consider a photon surface as an example \cite{Claudel:2000yi}. A photon surface is a hypersurface $S$ in $M$ such that any null geodesic initially tangent to $S$ will remain tangent to it. In other words, any null geodesics tangent to $S$ are entirely in $S$. Applying the Gauss decomposition from Eq. (\ref{eq:Gauss_decomposition}) to the geodesic equation $\dot{\gamma}^\alpha \LCM_\alpha \dot{\gamma}^\beta=0$ for an arbitrary null geodesic $\gamma$, we find 
\begin{equation} 
   0= \dot{\gamma}^\alpha \LCM_\alpha \dot{\gamma}^\beta =
      \dot{\gamma}^\alpha \LCS_\alpha \dot{\gamma}^\beta
    - \SFS_{\alpha\gamma} \dot{\gamma}^\alpha \dot{\gamma}^\gamma n^\beta \quad \Rightarrow \quad\dot{\gamma}^\alpha \LCS_\alpha \dot{\gamma}^\beta=0, \quad \SFS_{\alpha\beta} \dot{\gamma}^\alpha \dot{\gamma}^\beta =0.
\end{equation}
Keeping in mind the correspondence between geodesic and tangent vectors from the previous analysis, we find out that for all null $v^\alpha\in TS$, $v^\alpha h_{\alpha\beta}v^\beta = 0$ the second fundamental form of the photon surface satisfies 
\begin{equation} \label{eq:kkv_1}
    \SFS_{\alpha\beta} v^\alpha v^\beta=0.
\end{equation}
From theorem \ref{th:pre_theorem} in App. \ref{sec:appendix_theorems} follows that Eq. (\ref{eq:kkv_1}) holds for any null $v^\alpha$ if and only if the photon surface $S$ is totally umbilical \cite{Claudel:2000yi}, i.e. its second fundamental form is proportional to the induced metric:
\begin{equation}  \label{eq:totally_umbilical}
\chi_{\alpha\beta} = \frac{\chi}{n-1} h_{\alpha\beta},
\end{equation}
where $\chi={\chi_\alpha}^\alpha$ is an arbitrary scalar function. Condition (\ref{eq:totally_umbilical}) is formulated in terms of the geometric properties of the surface only and does not refer to geodesic equations explicitly. This formulation is an effective way to analyze non-integrable dynamical systems \cite{Frolov:2017kze,Kobialko:2020vqf,Kobialko:2021uwy} and to study general theoretical problems such as Penrose inequalities \cite{Shiromizu:2017ego,Feng:2019zzn,Yang:2019zcn}, uniqueness theorems \cite{Cederbaum,Yazadjiev:2015hda,Yazadjiev:2015mta,Yazadjiev:2015jza,Yoshino:2016kgi,Yazadjiev:2021nfr,Koga:2020gqd,Rogatko,Cederbaumo,Cederbaum:2019rbv} and hidden symmetries \cite{Gibbons:2011hg,Cariglia:2014ysa,Frolov:2017kze,Kobialko:2021aqg,Kobialko:2022ozq}.

In our previous paper \cite{Kobialko:2022uzj} we generalized this geometric result to the case of worldlines of massive electrically charged particles.
To do this, we used explicit spacetime symmetries described by a single Killing vector field.  
This allows us to give a new definition of massive particle surfaces at fixed energy and prove the key theorem about their geometric description. However, as in the case of photon surfaces, the surfaces defined in Ref. \cite{Kobialko:2022uzj} for massive charged particles exist in static or spherically symmetrical spacetimes but not in stationary spacetimes with rotation. Nevertheless, in Ref. \cite{Teo:2020sey} different spherical orbits are shown to fill the whole spheres in Kerr metric. The novel feature is that different worldlines lying in the same sphere have different values of conserved energy. In fact, the energy integral on spherical orbits in Kerr depends on two quantities: their radius and the Carter constant associated with the second-rank Killing tensor. In order to describe this situation in more general terms, here we introduce a generalization of static massive particle surfaces of \cite{Kobialko:2022uzj} in the same fashion as photon surfaces in static spacetimes were generalized to fundamental photon surfaces in the stationary ones in \cite{Kobialko:2020vqf}. 

Recall that fundamental photon surfaces satisfy Eq. (\ref{eq:kkv_1}) only for a subset of null $v^\alpha$ constrained by a linear condition. For Kerr-like metrics, this constraint extracts such vectors $v^\alpha$ that correspond to the fixed impact parameter $\rho = - v_\alpha k_{(\varphi)}^\alpha / v_\beta k_{(t)}^\beta$, where $k^\alpha_{(t)}$, $k^\alpha_{(\varphi)}$ are timelike and azimuthal Killing vectors. First, we will introduce a more general definition of massive particle surfaces that satisfy condition (\ref{eq:kkv_1}) only for $v^\alpha$ corresponding to worldlines with fixed conserved quantities corresponding to a set of Killing vectors. And then we will show that the same hypersurface can correspond not just to one value of conserved quantities, but to some family of conserved quantities with a nonlinear constraint.

\section{Massive particles surfaces}
\label{sec:MPS}

\subsection{Definition and theorem}

The first step toward our goal is to give a more general definition for massive particle surfaces \cite{Kobialko:2022uzj} suitable for spacetimes with a larger number of isometries. Let $S$ be a timelike hypersurface without boundaries. Define projections of Killing vectors $k_a{}^\beta$ as follows
\begin{equation}
 \kappa_a{}^\alpha\equiv h^\alpha_\beta  k_a{}^\beta,
\end{equation}
with a Gram matrix
\begin{equation} \label{eq:GG} 
    \G_{ab} = {\kappa_a}^\alpha{\kappa_b}_\alpha =
    G_{ab} - (k_{a\alpha} n^\alpha)(k_{b\beta} n^\beta).
\end{equation}
As before, we will assume that the matrix $\G_{ab}$ is non-degenerate and the matrix $\G^{ab}$ is its inverse. If some Killing vectors or their projections become linearly dependent or have a singular point, we  exclude them from the set. 

We denote the subspace spanned by projections of Killing vectors as $\mathbf{K}\subseteq TS$ and its orthogonal complement as $\mathbf{K}^\perp$, i.e. $\mathbf{K} \oplus \mathbf{K}^\perp = TS$. Their dimensions are abbreviated as\footnote{In general, the dimensions can change from point to point. But we assume that $n_\perp \geq1$ and $n_K\geq1$.}
\begin{equation}
    n_K \equiv \text{dim}\mathbf{K},\qquad n_\perp \equiv \text{dim}\mathbf{K}^\perp,\qquad
    n_K + n_\perp + 1 = n.
\end{equation}
Using the inverse Gram matrix $\G^{ab}$, one can define projectors to the subspaces $\mathbf{K}$ and $\mathbf{K}^\perp$ as
\begin{equation}
    K^\beta_\alpha = \kappa_{a\alpha} \G^{ab} \kappa_{b}{}^{\beta},\qquad
    U^\beta_\alpha = h_\alpha^\beta - K^\beta_\alpha.
\end{equation}

Note that worldlines with a fixed set of conserved quantities $Q_a$ cannot touch an arbitrary point on the timelike surface $S$.
Indeed, for the worldline to touch $S$, the tangent vectors $v^\alpha$ must satisfy the conditions of equation (\ref{eq:constraint}).
According to the intermediate results (\ref{eq:AB_representation}) and (\ref{eq:solution_A}) of theorem \ref{th:solutions} from Appendix \ref{sec:appendix_theorems}, the general solution of system (\ref{eq:constraint}) with the additional condition $v^\alpha n_\alpha=0$ has the form 
\begin{align}  
 v^\alpha=\kappa_{a}{}^\alpha \G^{ab}q_b  +u^\alpha, \quad n_\alpha u^\alpha=0,  \quad k_{a}{}_\alpha u^\alpha=0,  \quad u^\alpha u_\alpha=-m^2-\G^{ab}q_a q_b.
\end{align}
Thus, as before, if the subspace $\mathbf{K}^\perp$ has a Euclidean signature, then the worldlines can only touch points of the surface $S$ such that
\begin{align}  \label{eq:accessible_surface} 
\G^{ab}q_a q_b\leq -m^2.
\end{align}
If $\mathbf{K}^\perp$ has a Lorentzian signature or consists of a single timelike vector, there are no restrictions. Since it is pointless to look for massive particle surfaces among surfaces that worldlines cannot even touch, we suggest the following:

\begin{definition} 
The $\Q$-touched hypersurface $\Sq \subset S$ is a maximal subset of $S$ where the system (\ref{eq:constraint}) with the additional condition $v^\alpha n_\alpha=0$ has a solution.
\end{definition}
 
As in the case of the $\Q$-domain of motion, the definition of $\Q$-touched hypersurface ensures that worldlines can touch the surface $\Sq$, but can not touch $S\setminus\Sq$.
In particular, the worldline cannot exit $\Sq$ tangentially through its boundary $\partial\Sq$ (consisting of points, where $\G^{ab}q_a q_b=-m^2$). Instead, the worldline can only leave $\Sq$ by moving in the normal direction from the surface. It is clear that any $\Q$-touched hypersurface is inside a $\Q$-domain, i.e. $\Sq\subseteq \M_\Q$. Indeed, by definition, the domain $M\setminus \M_\Q$ does not contain vectors satisfying the constraints (\ref{eq:constraint}). Using Eq. (\ref{eq:GG}), we can see that if all Killing vectors are tangent to the surface, then $q_a \G^{ab} q_b = q_a G^{ab} q_b$. This implies that $\partial \Sq$ is a subset of $\partial \M_\Q$. Now we are ready to select the massive particle surface among the $\Q$-touched hypersurfaces.

\begin{definition} 
A massive particle surface is a $\Q$-touched hypersurface $\Sq$ such that, for every point $p\in \Sq$ and every vector $v^\alpha|_p \in T_p\Sq$ such that $v^\alpha  \kappa_{a\alpha}|_p=q_a$ and $v^\alpha v_\alpha|_p=-m^2$, there exists a worldline $\gamma$ of $M$ for a particle with mass $m$, electric charge $q$ and the integrals of motion $Q_a$ associated with the Killing vectors $k_a{}^\alpha$ such that $\dot{\gamma}^\alpha(0) =v^\alpha|_p$ and $\gamma\subset \Sq$.
\end{definition} 

Simply put, any worldline with a given set of conserved quantities $Q_a$ that touches $\Sq$ (at least at one point) belongs to $\Sq$ entirely. Note that this definition and its implications remain valid for photons in a medium, such as plasma, with the only difference being that the effective photon mass may vary as a function of coordinates \cite{Perlick:2017fio,Asenjo:2010zz}. The key geometric properties of the massive particle surfaces are given in the following theorem: 

\begin{theorem}\label{th:theorem}
Let $\Sq$ be a smooth $q$-touched hypersurface. The following statements are equivalent:

(i) $\Sq$ is a massive particle surface for given $q, m$ and $Q_a$;

(ii) the second fundamental form satisfies 
\begin{equation}\label{eq:condition_ii}
    \SFS_{\alpha\beta} v^\alpha v^\beta=-q n^\beta F_{\beta\lambda} v^\lambda,
\end{equation}
\indent for all $p\in\Sq$ and $\forall v^\alpha \in T\Sq$ such that $v^\alpha v_\alpha=-m^2$ and $v^\alpha \kappa_{a\alpha}=q_a$;

\begin{subequations}\label{eq:item_3}
(iii) the second fundamental form satisfies
\begin{equation}\label{eq:condition_3}
\chi_{\alpha\beta} =
    \chi_\tau \left(
          h_{\alpha\beta}
        + \mathcal{H}^{ab} \kappa_{a\alpha} \kappa_{b\beta}
    \right) 
    + \beta^a{}_{(\alpha} \kappa_{a\beta)},  
\end{equation}
\indent where $\chi_\tau$, $\mathcal{H}^{ab}$ and $\beta^a{}_{\alpha}$ ($ \kappa_{a\alpha} \beta^{b\alpha}=0$) are some functions restricted by the following constraints
\begin{align} \label{eq:condition_q}
     \chi_\tau\left( \mathcal{H}^{ab} q_{a} q_{b} - m^2\right)
    +  q \mathcal{F}^{a} q_a
    = 0,
    \quad
    2 q_a \beta^a{}_{\alpha} = - q f_{\alpha},
\end{align}
\indent and $\mathcal{F}^{a}$, $f_{\alpha}$ represent the following components of the field tensor $F_{\alpha\beta}$
\begin{equation} \label{eq:F_decomposition}
    \F^a = \G^{ab} \kappa_{b}{}^{\beta} n^\alpha F_{\alpha \beta},\quad
    f_{\alpha} = U^\beta_\alpha n^\gamma F_{\gamma \beta},
    \quad\text{i.e.},\quad
    n^\beta F_{\beta\alpha}=\mathcal{F}^{a} \kappa_{a\alpha}+ f_{\alpha},\quad
    \kappa_{a}{}^{\alpha} f_{\alpha}=0.
\end{equation}
\end{subequations}

(iv) every worldline in $\Sq$ with $\dot{\gamma}^\alpha \kappa_{a\alpha}|_p=q_a$ and $\dot{\gamma}^\alpha  \dot{\gamma}_\alpha|_p=-m^2$ at some point $p\in \Sq$ is 
\indent a worldline in $M$. 

\end{theorem}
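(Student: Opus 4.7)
The plan is to use $(ii)$ as a hub and establish $(i)\Leftrightarrow(iv)\Leftrightarrow(ii)\Leftrightarrow(iii)$. The equivalences linking $(i)$, $(ii)$ and $(iv)$ rely on the Gauss decomposition (\ref{eq:Gauss_decomposition}) combined with the bijection between tangent vectors constrained by (\ref{eq:constraint}) and worldlines obeying (\ref{eq_particles}); the equivalence $(ii)\Leftrightarrow(iii)$ is a tensor-algebra exercise on $T\Sq$ driven by the appendix Theorem~\ref{th:pre_theorem}.

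The equivalence $(i)\Leftrightarrow(iv)$ follows from ODE uniqueness: given $(i)$, any intrinsic curve on $\Sq$ with admissible initial data shares those data with an ambient worldline, which by $(i)$ is trapped in $\Sq$ and hence agrees with the intrinsic curve; conversely $(iv)$ immediately implies that an ambient worldline touching $\Sq$ with admissible data coincides with an intrinsic curve that, by $(iv)$, solves (\ref{eq_particles}). For $(i)\Rightarrow(ii)$, apply (\ref{eq:Gauss_decomposition}) to (\ref{eq_particles}) along $\gamma\subset\Sq$ and contract with $n_\beta$:
\begin{equation*}
 -\SFS_{\alpha\beta}\dot\gamma^\alpha\dot\gamma^\beta = q\,n_\beta F^\beta{}_\lambda \dot\gamma^\lambda.
\end{equation*}
Since by (\ref{eq:constraint_solution}) and $(i)$ every admissible $v^\alpha\in T_p\Sq$ arises as $\dot\gamma^\alpha(0)$ for some worldline contained in $\Sq$, (\ref{eq:condition_ii}) follows pointwise. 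The reverse $(ii)\Rightarrow(iv)$ is Gauss decomposition run backward: the intrinsic dynamical equation on $\Sq$ together with the normal identity (\ref{eq:condition_ii}) reconstructs the full ambient equation (\ref{eq_particles}).

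The core step is $(ii)\Leftrightarrow(iii)$. Parametrize any admissible $v^\alpha$ through (\ref{eq:constraint_solution}) as $v^\alpha=\kappa_a{}^\alpha\G^{ab}q_b+u^\alpha$ with $u^\alpha\in\mathbf{K}^\perp$ and $u_\alpha u^\alpha=-m^2-\G^{ab}q_a q_b$; decompose $n^\beta F_{\beta\alpha}$ via (\ref{eq:F_decomposition}); and substitute into (\ref{eq:condition_ii}). Collecting by powers of $u^\alpha$: the $u^2$-piece must hold on the fixed-norm locus in $\mathbf{K}^\perp$ and, via Theorem~\ref{th:pre_theorem}, forces $U^\mu_\alpha U^\nu_\beta\SFS_{\mu\nu}=\chi_\tau U_{\alpha\beta}$ for some scalar $\chi_\tau$; the $u^1$-piece is a linear functional on $\mathbf{K}^\perp$ that must vanish identically, which identifies the $\mathbf{K}\times\mathbf{K}^\perp$ block of $\SFS$ with $\beta^a{}_\alpha$ and produces $2q_a\beta^a{}_\alpha=-q f_\alpha$; the $u^0$-piece, after parametrizing the residual $\mathbf{K}\times\mathbf{K}$ block as $\chi_\tau\mathcal{H}^{ab}\kappa_{a\alpha}\kappa_{b\beta}$ (absorbing the trivial $\chi_\tau h_{\alpha\beta}$-contribution), yields $\chi_\tau(\mathcal{H}^{ab}q_a q_b-m^2)+q\F^a q_a=0$. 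Reassembling the three blocks reproduces (\ref{eq:condition_3}) together with (\ref{eq:condition_q}); the converse $(iii)\Rightarrow(ii)$ is direct substitution.

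The main obstacle will be the $(ii)\Rightarrow(iii)$ step: the constraints restrict $u^\alpha$ to a fixed-norm hyperboloid in $\mathbf{K}^\perp$ rather than to a linear subspace, so one must carefully separate the tensor blocks of $\SFS_{\alpha\beta}$ that are pinned by this restricted quadratic identity from those that remain free (encoded in $\mathcal{H}^{ab}$), and keep track of which free parameters are then locked together by the constraints (\ref{eq:condition_q}). The projector split (\ref{eq:F_decomposition}) with its orthogonality $\kappa_a{}^\alpha f_\alpha=0$, together with the appendix Theorem~\ref{th:pre_theorem}, are indispensable at this stage.
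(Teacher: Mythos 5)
Your proposal is correct and follows essentially the same route as the paper: the implications among $(i)$, $(ii)$ and $(iv)$ via the Gauss decomposition and the worldline--tangent-vector bijection, and the core step $(ii)\Leftrightarrow(iii)$ via the split $v^\alpha=\kappa_a{}^\alpha\G^{ab}q_b+u^\alpha$, separation by parity in $u^\alpha$ on the fixed-norm locus, and Theorem~\ref{th:pre_theorem} for the quadratic block. The only cosmetic differences are that you wire the implications as $(i)\Leftrightarrow(iv)$, $(i)\Rightarrow(ii)\Rightarrow(iv)$, $(ii)\Leftrightarrow(iii)$ rather than the paper's cycle $(i)\Rightarrow(ii)\Rightarrow(iii)\Rightarrow(iv)\Rightarrow(i)$, and that you inline the parity/spanning argument that the paper packages once and for all as Theorem~\ref{th:solutions_general} and then merely instantiates with $A=h_{\alpha\beta}$, $B=\kappa_{a\alpha}$, $\w'=\tfrac{1}{2}q\,n^\beta F_{\beta\lambda}$.
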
 
\begin{proof}

The proofs that $(i) \Rightarrow (ii)$, $(iii) \Rightarrow (iv)$ and $(iv) \Rightarrow (i)$ are the same as the corresponding proofs from the theorem in Ref. \cite{Kobialko:2022uzj}. Accordingly, we will generalize explicitly only the proof of $(ii) \Rightarrow (iii)$.
We start from the following conditions on vectors $v^\alpha$ tangent to the hypersurface $\Sq$: 
\begin{equation} \label{eq:quantities_for_first_proof_12}
    h_{\alpha\beta} v^\alpha v^\beta=-m^2,\qquad
    v^\alpha \kappa_{a\alpha}=q_a.
\end{equation}
The aim is to answer the question of when are the solutions of this system in the set of solutions of the following system
\begin{equation}
    \SFS_{\alpha\beta} v^\alpha v^\beta=-q n^\beta F_{\beta\lambda} v^\lambda,
\end{equation}
without any additional linear constraints. In order to apply theorem \ref{th:solutions_general} from Appendix \ref{sec:appendix_theorems}, we define the following matrices and vectors:
\begin{subequations}
\label{eq:quantities_for_first_proof}
\begin{align}
    \label{eq:quantities_for_first_proof_1}
    &
    A = h_{\alpha\beta}, \qquad
    B = \kappa_{a\alpha}, \qquad
    \f = q_a, \qquad
    \w = 0,\qquad
    d = -m^2,\qquad
    \x = v^\alpha,
    \\
    \label{eq:quantities_for_first_proof_2}
    &
    A' = \chi_{\alpha\beta}, \qquad
    B' = 0, \qquad
    \f' = 0, \qquad
    \w' = \frac{1}{2}q n^\beta F_{\beta\lambda},\qquad
    d' = 0.
\end{align}
\end{subequations}
All vectors and tensors with Greek indices defined in (\ref{eq:quantities_for_first_proof}) are tangent to $\Sq$ by construction. Therefore, we can work with the matrix $A$ as with the operator $T\Sq \to T^*\Sq$, but not as one acting on the entire spacetime $M$. 
In this case, the operator A is non-degenerate, since it is a non-singular metric in $\Sq$. The non-degeneracy of $A$ is one of the requirements of theorem \ref{th:solutions_general} from Appendix \ref{sec:appendix_theorems}.
The identity operator $\mathds{1}$ in $T\Sq$ corresponds to $h_\alpha^\beta$.
In addition, we must define the matrices $G$, $\Pi$ and $A_\text{pr}$ as follows
(see Eqs. (\ref{eq:G_def}), (\ref{eq:projector_pi}) and (\ref{eq:theorem_condition_5_2}) for details):
\begin{align}
    &\label{eq:system_1_1}
    G = \G^{ab},
    \\\nonumber&
    \Pi = h_\alpha^\beta - \kappa_{a\alpha} \G^{ab} \kappa_{b\gamma} h^{\gamma\beta}
      = U_\alpha^\beta,
      \\\nonumber &\label{eq:system_1_2}
    A_\text{pr} = \Pi^T A \Pi = h_{\alpha\beta} - \kappa_{a\alpha} \G^{ab} \kappa_{b\beta} = U_{\alpha\beta}.
\end{align}
Application of the theorem is possible if the kernel of the matrix $B$ is spanned by solutions $\x$ for system (\ref{eq:system_1_1}). Recall that $\text{ker}\kappa_{a\alpha} = \{x^\alpha:\; \kappa_{a\alpha}x^\alpha = 0\}$, i.e. the kernel of $B$ in this case is the entire subspace of $\mathbf{K}^\perp$. 
This is equivalent to the requirement that projections of the solutions onto the orthogonal complement $u^\alpha = U^\alpha_\beta v^\beta$ span the entire kernel of the matrix $\kappa_{a\alpha}$.
This requirement is valid at interior points of the hypersurface $\Sq/\partial \Sq$, since $u^\alpha$ are all vectors from $\mathbf{K}^\perp$ with fixed norm $u^\alpha u_\alpha= -m^2-\G^{ab}q_a q_b$. The norm sign fits the $\mathbf{K}^\perp$ signature because $\Sq$ is $\Q$-touched. Only on the boundaries $\partial\Sq$ the norm becomes equal to zero, and this condition is violated. We will consider only interior points of $\Sq\setminus\partial \Sq$, but the result can be extrapolated to the entire hypersurface $\Sq$ due to its smoothness.
 
Applying the theorem, we get the form of $\chi_{\alpha\beta}$
\begin{align}
    \chi_{\alpha\beta} &= \chi_\tau \left(h_{\alpha\beta} + \kappa_{a\alpha} \mathcal{H}^{ab} \kappa_{b\beta}\right) + \beta^a{}_{(\alpha} \kappa_{\beta)a},
\end{align}
where we use $\chi_\tau$, $\beta^a{}_\alpha$ and $\chi_\tau \mathcal{H}^{ab}$ instead of an arbitrary function $\lambda$, matrices $S_1$ and $S_2$ from the theorem formulation respectively. The theorem states that the following three conditions must hold
\begin{align}
    &
    0 = \kappa_{a\alpha} \beta^{b\alpha},
    \\\nonumber&
    0 =
    \frac{q}{2}U_\alpha^\beta n^\gamma F_{\gamma\beta} + \beta^a{}_{\alpha} q_a
    =
      \frac{q}{2} f_{\alpha}
    + \beta^a{}_{\alpha} q_a,
    \\\nonumber&
     0 = \chi_\tau\left(q_a \mathcal{H}^{ab} q_b - m^2\right) +q q_a \G^{ab} \kappa_b{}_\alpha h^{\alpha\beta} n^\gamma F_{\gamma\beta}   = \chi_\tau\left(q_a \mathcal{H}^{ab} q_b - m^2\right) + q q_a \mathcal{F}^a,
\end{align}
where the projections $\mathcal{F}^a$, $f_\alpha$ are defined in (\ref{eq:F_decomposition}). This completes the proof of the theorem.
\end{proof}

This theorem is a direct generalization of an analogous result from Ref. \cite{Kobialko:2022uzj} to the case of an arbitrary number of isometries. The basic geometric description of the massive particle surface is given by statement $(iii)$ of the theorem. Let us try to clarify the geometric meaning of this definition. First of all, Eq. (\ref{eq:condition_3}) projected onto $\mathbf{K}$ and $\mathbf{K}^\perp$ gives the identities
\begin{subequations}
    \begin{align}
        \label{eq:chi_components_0.kk}
        \kappa_a{}^\alpha \chi_{\alpha\beta} \kappa_b{}^\beta 
        &=
        \chi_\tau\left(\mathcal{H}_{ab} + \G_{ab}\right),
        \\
        \label{eq:chi_components_0.kt}
        \kappa_a{}^\alpha \chi_{\alpha\beta} U_\lambda^\beta
        &=\G_{ab}\beta^b{}_{\lambda},
        \\
        \label{eq:chi_components_0.tt}
        U_\gamma^\alpha \chi_{\alpha\beta} U_\lambda^\beta &=
        \chi_\tau U_\gamma^\alpha h_{\alpha\beta} U_\lambda^\beta,
    \end{align}
\end{subequations}
where $\mathcal{H}_{ab} = \G_{ac}\mathcal{H}^{cd}\G_{db}$. From Eq. (\ref{eq:chi_components_0.tt}) follows that the massive particle surface is partially umbilical \cite{Kobialko:2020vqf} in $\mathbf{K}^\perp$. A partially umbilical surface is defined by the condition that the extrinsic curvature is the same along certain directions, such as the subspace $\mathbf{K}^\perp$ in our case. When there is only one such direction ($n^\perp = 1$), the condition for partially umbilical surface degenerates, fixing the function $\chi_\tau$. One can extract $\mathcal{H}_{ab}$ and $\beta^a{}_{\alpha}$ from Eqs. (\ref{eq:chi_components_0.kk}) and (\ref{eq:chi_components_0.kt})
\begin{subequations}
\label{eq:chi_components}
\begin{align}
    \label{eq:chi_components.kk}
    \mathcal{H}_{ab} &=
             \frac{1}{\chi_\tau}\kappa_a{}^\alpha \kappa_b{}^\beta \chi_{\alpha\beta} - \G_{ab},\\\label{eq:chi_components.kt}
   \beta^a{}_{\lambda}&=G^{ab} \kappa_b{}^\alpha U_\lambda^\beta \chi_{\alpha\beta},
\end{align}
\end{subequations}
where we assume that $\chi_\tau$ is not zero. 

\subsection{Extended massive particle surface}

It is convenient to ignore condition (\ref{eq:accessible_surface}) and consider the hypersurface at first sight without it. To do that, we introduce an \textit{extended massive particle surface} $\mathcal{S}_e$. Namely, the massive particle surfaces $\Sq$ are defined as a part of some surface $S$. The geometry of the complement part of the surface $S\setminus\Sq$ is not fixed in any way, since none of the worldlines with the corresponding $Q_a$ can touch these points. The most natural extension of $\Sq$ seems to be an entirely smooth one without boundaries. We require $S=\mathcal{S}_e$ to possess the second fundamental form satisfying Eq. (\ref{eq:condition_3}) with conditions (\ref{eq:condition_q}) in all its points. Of course, in the general case, such an extension may not exist. Altogether, the definition of the extended massive particle surface is the following

\begin{definition}
An extended massive particle surface is a hypersurface $\mathcal{S}_e$ with no boundaries, such that for every point $p\in \mathcal{S}_e$, the second fundamental form takes the form (\ref{eq:condition_3}) subject to the imposed conditions (\ref{eq:condition_q}).
\end{definition}

Obviously, any $\Q$-touched part of the extended massive particle surface is an ordinary massive particle surface. However, $\mathcal{S}_e$ has a stiffer geometry in the complement part $S\setminus\Sq$. In our opinion, this restriction will be extremely useful for the analysis of both ordinary surfaces of massive particles and general issues such as integrability and Penrose inequalities. In particular, boundaryless $\mathcal{S}_e$ surfaces require less technical detail and suggest more topological applications, making them a particularly useful object of study.
 
\subsection{Constraints}

Conditions (\ref{eq:condition_q}) are invariant under the set of transformations 
\begin{align}
q_a\rightarrow \lambda q_a, \quad q\rightarrow \lambda q, \quad m^2\rightarrow \lambda m^2.
\end{align}
In particular, for the null geodesics, there is an invariance $q_a\rightarrow \lambda q_a$ associated with the conformal invariance of photon worldlines. Note that, any set $Q_a$ can have its own unique hypersurface. However, as we will see later, the hidden symmetries of space can lead to gluing different surfaces together and forming complete hypersurface $S$ without boundaries. 

Conditions (\ref{eq:condition_q}) impose some restrictions on the components of the second fundamental form $\chi_{\alpha\beta}$. However, not all components are generally rigidly fixed, and some of them can be completely arbitrary. To explicitly reveal the role of constraints (\ref{eq:condition_q}), we define the vector $\tau^\alpha$ (similarly to the covector $\tau_\alpha$, the dual vector of the impact parameter $ \rho^\alpha$ in Ref. \cite{Kobialko:2020vqf}) in the subspace $\mathbf{K}$ as
\begin{align}\label{eq:rho_def}
  \tau^\alpha=h^{\alpha \beta} \kappa_{a\beta} \G^{ab} q_b, \quad \tau^2\equiv \tau^\alpha  \tau_\alpha=q_a\G^{ab}q_b.
\end{align}

If $\mathbf{K}$ is Euclidean, then $\tau^2$ is strictly positive for any nonzero $q_a$. Otherwise, it is non-zero due to the inequality (\ref{eq:accessible_area}) for $m \neq 0$. Then we define the projection $T^\alpha_\beta$ along the direction $\tau^\alpha$ and the projector $P^\alpha_\beta$ onto the orthogonal complement $\tau^\alpha$ in $\mathbf{K}$:
\begin{align}\label{eq:rho_projectors}
  T_\beta^\alpha= \tau^{-2} \tau^\alpha \tau_\beta, \quad P_\beta^\alpha=K_\beta^\alpha - T_\beta^\alpha. 
\end{align}
The second and third terms in the equation (\ref{eq:condition_3}) can be further projected using (\ref{eq:rho_projectors}) as follows:
\begin{align} \label{eq:chi_rp}
    \SFS_{\alpha\beta} = 
    \chi_\tau h_{\alpha\beta}
    + z \tau_\alpha \tau_\beta
    + \sigma_{\alpha\beta}
    + \tau_{(\alpha} z_{\beta)},
\end{align}
where $z^\alpha$ and $\sigma_{\alpha\beta}=\sigma_{\beta\alpha}$ have the following projection properties
\begin{align} \label{eq:sigma_def}
    K_{\alpha'}^\alpha z_{\alpha}=0,
    \qquad
    T_{\alpha'}^\alpha \sigma_{\alpha\beta} T_{\beta'}^\beta = 0,
    \qquad
    T_{\alpha'}^\alpha \sigma_{\alpha \beta} U_{\beta'}^\beta = 0,
    \qquad
    U_{\alpha'}^\alpha \sigma_{\alpha \beta} U_{\beta'}^\beta = 0.
\end{align}
Plugging Eqs. (\ref{eq:chi_components}), (\ref{eq:chi_rp}) and the identity $q_a = \kappa_{a\alpha}\tau^\alpha$ in the constraints (\ref{eq:condition_q}), we get the following independent of $\sigma_{\alpha\beta}$ results
\begin{equation}
    z = \tau^{-4} \left(\chi_\tau m^2-q \mathcal{F}^{a} \kappa_{a\alpha}\tau^\alpha\right),
    \qquad
    z_{\alpha} = - \frac{1}{2} \tau^{-2} q f_{\alpha}.
\end{equation}

Finally, we can reformulate statement \textit{(iii)} in theorem \ref{th:theorem} without additional restrictions as follows:

{\it
(iii) the second fundamental form can be represented as
\begin{align} \label{eq:sigma}
    \SFS_{\alpha\beta} = 
    \chi_\tau h_{\alpha\beta}
    + \tau^{-4} \left(\chi_\tau m^2-q \mathcal{F}^{a} \kappa_{a\gamma}\tau^\gamma\right) \tau_\alpha \tau_\beta
    + \sigma_{\alpha\beta}
    - \frac{1}{2} \tau^{-2} q \tau_{(\alpha} f_{\beta)},
\end{align}
where $\chi_\tau$ is some function, the vector $\tau^\alpha$ is defined in equation (\ref{eq:rho_def}), an arbitrary symmetric matrix $\sigma_{\alpha\beta}$ is constrained by the conditions Eq. (\ref{eq:sigma_def}), while $\mathcal{F}^{a}$, $f_{\alpha}$ represent the components of the $F_{\alpha\beta}$ field tensor defined in (\ref{eq:F_decomposition}).
}

The number of unrestricted components in $\chi_{\alpha\beta}$ is placed in $\sigma_{\alpha\beta}$. In total $\sigma_{\alpha\beta}$ has $n(n-1)/2$ components restricted by projectors.  There are $1$, $n_\perp$, and $n_\perp(n_\perp + 1)/2$ independent constraints in each condition on $\sigma_{\alpha\beta}$ from (\ref{eq:sigma_def}) respectively. 
By subtracting the number of constraints from the total number of components in $\sigma_{\alpha\beta}$, we get $(n_K-1)(2n-n_K)/2$ unrestricted components. In the case of $n_K=1$ there are no unrestricted components at all, and the outer geometry of the hypersurface is uniquely determined by the inner geometry, as expected \cite{Kobialko:2022uzj}. For null geodesics, we simply get
\begin{equation}\label{eq:condition_3_*} 
\SFS_{\alpha\beta}=\chi_\tau h_{\alpha\beta}+\sigma_{\alpha\beta}.
\end{equation}
Thus, the hypersurface is umbilical for subspace $\{T^\alpha_\beta, U^\alpha_\beta\}$ as expected for fundamental photon surfaces \cite{Kobialko:2020vqf}.

\subsection{One Killing vector}
In the case of only one Killing vector field $k^\alpha$, all components of the second fundamental form turn out to be strictly fixed (in this subsection, we omit the indices $a$, $b$, since there is only one direction). Indeed, equation (\ref{eq:condition_q}) implies ($\E_k=-q_k$)
\begin{align}
      \mathcal{H} 
    = \frac{m^2}{\E^2_k}+\frac{q}{\chi_\tau \E_k} \mathcal{F},
    \quad
    \beta_{\alpha} =\frac{q}{2\E_k} f_{\alpha},
\end{align}
Substituting this back to Eq. (\ref{eq:condition_3}), the second fundamental form is read 
\begin{align}
    \chi_{\alpha\beta} = \chi_\tau \left(
          h_{\alpha\beta}
        + \frac{m^2}{\E^2_k} \kappa_{\alpha} \kappa_{\beta}
    \right)+\frac{q}{2\E_k}  n^\rho F_{\rho(\alpha} \kappa_{\beta)}.
\end{align}
This is exactly the same property \textit{(iii)} from the theorem in Ref. \cite{Kobialko:2022uzj}. Thus, the new massive particle surfaces are in fact a direct generalization of similar surfaces from Ref. \cite{Kobialko:2022uzj} for spacetimes admitting a larger number of isometries.

\subsection{Orthogonal Killing vector}
If there exists a Killing vector $k_{\perp}^\alpha$ that is everywhere orthogonal to the hypersurface $\Sq$, then $\Sq$ is a totally geodesic hypersurface, i.e., $\SFS_{\alpha\beta}=0 $. This follows from the Killing equation $\LCM_{(\alpha}k_{\perp}{}_{\beta)}=0$ and the symmetry of the product $h^{\lambda}_{\alpha}h^{\rho}_{\beta}$:
\begin{equation}
    \SFS_{\alpha\beta}
    \equiv
    h^{\lambda}_{\alpha}
    h^{\rho}_{\beta}
    \LCM_{\lambda} n_{\rho}
    =
    h^{\lambda}_{\alpha}
    h^{\rho}_{\beta}
    \LCM_{\lambda} (\mathfrak{K} k_{\perp}{}_{\rho})
    =
    h^{\lambda}_{\alpha}
    h^{\rho}_{\beta}
    \left[
          \mathfrak{K} \LCM_{\lambda} k_{\perp}{}_{\rho}
        + k_{\perp}{}_{\rho} \LCM_{\lambda} \mathfrak{K}
    \right]
    =
    0,
\end{equation}
where the normal vector is expressed through the Killing vector $n^\alpha = \mathfrak{K} k^\alpha_\perp$.
The worldline $\gamma$ can touch the hypersurface $\Sq$ if and only if $q_\perp = 0$.
Since the condition $\SFS_{\alpha\beta}=0$ results in $\chi_\tau=0$, $\beta_a{}^\alpha=0$, constraints (\ref{eq:condition_q}) can be simplified to the conditions for the Lorentz force acting on particles lying on the surface 
\begin{align}
    q \mathcal{F}^{a} q_a = 0,
    \quad
    q f_{\alpha} = 0,
\end{align}
i.e., either particles are neutral $q=0$, or $\mathcal{F}^{a} q_a = 0$, $f_\alpha = 0$. 

\subsection{Tangent Killing vectors}

Of great interest is the class  of massive particle surfaces that have the same set of symmetries as the original spacetime. It is well known that surfaces are invariant under the action of some isometry group if the corresponding Killing vector field is tangent to $S$. Thus, we assume that the Killing vectors touch the massive particle surface  (not necessarily all the spacetime Killing vectors, but at least those that span the $\mathbf{K}$ subspace), and they are all linearly independent. If they are not linearly independent, remove the redundant vectors from the considered set of Killing vectors. The mixed components of the second fundamental form can be rewritten as
\begin{align}
    U_\gamma^{\alpha} \kappa^\beta_a \chi_{\alpha\beta} =
    U_\gamma^{\alpha} \kappa^\beta_a \nabla_\alpha n_\beta =
    -U_\gamma^{\alpha} n^\lambda \nabla_\alpha \kappa_{a\lambda}.
\end{align}
Combining with Eq. (\ref{eq:chi_components.kt}), we get
\begin{align}\label{eq:tangent_beta}
    \beta^a{}_{\alpha}
    = - \G^{ab} n^\lambda U_{\alpha}^{\beta} \nabla_\beta \kappa_{b\lambda}.
\end{align}
Substituting this expression into the linear constraint in Eq. (\ref{eq:condition_q}), we get a constraint on some components of $F_{\alpha\beta}$
\begin{align} 
    U^\beta_\gamma n^\alpha \left(
          q F_{\alpha\beta}
        - 2 q_a \G^{ab} \nabla_\beta \kappa_{b\alpha}
    \right) = 0,
\end{align}
Before proceeding to the analysis of the components of the second fundamental form tangent to the Killing vectors, we derive several identities. From the Killing equations, it follows
\begin{align} \label{eq:tangent_k_1}
    n^\beta (\kappa_a{}^\alpha \nabla_\alpha \kappa_{b\beta} + \kappa_b{}^\alpha \nabla_\alpha \kappa_{a\beta}) = 
    - n^\beta (\kappa_a{}^\alpha \nabla_\beta \kappa_{b\alpha} + \kappa_b{}^\alpha \nabla_\beta \kappa_{a\alpha}) &= \\\nonumber
   = - n^\beta \nabla_\beta (\kappa_a{}^\alpha \kappa_{b\alpha})& =
    - n^\beta \nabla_\beta \G_{ab}. 
\end{align}
Using the symmetry of the second fundamental form ($\chi_{\alpha\beta}=\chi_{\beta\alpha}$), we get the expression for the components of the second fundamental form tangent to the Killing vectors
\begin{align} \label{eq:tangent_k_2}
    &
    \kappa_a{}^\alpha \kappa_b{}^\beta \chi_{\alpha\beta} = 
    \kappa_a{}^\alpha \kappa_b{}^\beta \nabla_\alpha n_\beta = 
    - n^\beta \kappa_a{}^\alpha \nabla_\alpha \kappa_{b\beta} =
    - \frac{1}{2} n^\beta \left(
          \kappa_a{}^\alpha \nabla_\alpha \kappa_{b\beta}
        + \kappa_b{}^\alpha \nabla_\alpha \kappa_{a\beta}
    \right) =
    \frac{1}{2} n^\beta \nabla_\beta \G_{ab}.
\end{align}
Comparing Eq. (\ref{eq:tangent_k_2}) and (\ref{eq:chi_components.kk}), we get the expression for the tensor $\mathcal{H}_{ab}$
\begin{align}\label{eq:HasG}
    \mathcal{H}_{ab} = \frac{1}{2\chi_\tau} n^\beta \nabla_\beta \G_{ab} - \G_{ab}
    \qquad\text{or}\qquad
    \mathcal{H}^{ab} =- \frac{1}{2\chi_\tau} n^\beta  \nabla_\beta\G^{ab} - \G^{ab},
\end{align}
Quadratic constraint (\ref{eq:condition_q}) becomes the master equation \cite{Kobialko:2021uwy,Kobialko:2021aqg} for massive particle surface
\begin{align} 
    \frac{1}{2} n^\beta  \nabla_\beta\G^{ab} q_{a} q_{b} + \chi_\tau\left(\G^{ab} q_{a} q_{b} + m^2\right)
    -  q \mathcal{F}^{a} q_a
    = 0.
\end{align}
Thus the second fundamental form of such a massive particle surface has no arbitrary components, i.e. tensor $\sigma_{\alpha\beta}$ in expression (\ref{eq:sigma}) turns out to be completely fixed. 

\section{Glued particle surfaces}
\label{sec:GPS}

The surfaces defined in the previous section usually capture worldlines only for certain values of the integrals of motion.
However, in a Kerr-type spacetime that allows spherical orbits of constant radius, each sphere captures worldlines defined by a one-parameter family of integrals of motion \cite{Teo:2020sey}. Therefore we find it useful to introduce the ``glued'' massive particle surfaces as follows:

\begin{definition} 
A glued particle surface $\mathcal{S}_g$ of the order $g$ is a hypersurface, which is an extended massive particle surface for a family of $Q_a$ of dimension $g \geq 1$ for fixed mass $m$ and electric charge $q$.
\end{definition}

Less formally, a glued particles surface $\mathcal{S}_g$ is a continuous stack of extended massive particle surfaces, representing the same hypersurface. Obviously, $\mathcal{S}_g$ contains all the corresponding massive particle surfaces (not extended), and  the order of its gluing is the same for all points of $p\in\mathcal{S}_g$. 

To find out whether the extended massive particle surface is glued, we calculate the variation of constraints (\ref{eq:condition_q}) with respect to $Q_a$
\begin{align}
    \left(
        2 \chi_\tau\mathcal{H}^{ab} q_{b}
        + q \mathcal{F}^{a}
    \right) \delta Q_a
    = 0,
    \quad
    \beta^a{}_{\alpha} \delta Q_a = 0.
\end{align}
The massive particle surface is glued if there exists a nontrivial solution for the constant $\delta Q_a$ on the surface. The dimension of the space of constant solutions $\delta Q_a$ determines the order of the glued surface.
 
We can also approach the problem from a different angle. Namely, consider some timelike surface $S$ without boundaries, partially umbilical everywhere, i.e. the second fundamental form takes form (\ref{eq:condition_3}). At each point of such a surface, one can calculate the matrices $\mathcal{H}^{ab}$ and $\beta^a{}_{\alpha}$ and consider the conditions from the equation (\ref{eq:condition_q}) as a system of quadratic and linear equations in unknown real variables $Q_a$. Clearly, the dimension of this family determines the possible \textit{maximal gluing order} $g_\text{max}$. If the hyperplane in space of $Q_a$ determined by the linear constraint intersects the quadric hypersurface determined by the quadratic equation, the maximal gluing order is $g_\text{max} = n_K - 1 - \text{rank}B$. If the hyperplane is only tangent to the hypersurface along some directions, then $g_\text{max} < n_K - 1 - \text{rank}B$. 

The solution $Q_a$ corresponds to the extended massive particle surface $\mathcal{S}_e$ if and only if it satisfies condition (\ref{eq:condition_q}) at all points $p\in\mathcal{S}_e $. Thus, we can apply any Lie derivative to condition (\ref{eq:condition_q}) while keeping $Q_a$ unchanged. The dimension of the subset of such solutions $Q_a$, satisfying the conditions on the entire extended massive particle surface, determines the actual order of gluing $g\leq g_\text{max}$. If the number of such solutions is countable, then the surface is not glued.

We rewrite system (\ref{eq:condition_q}) in the form given in Eq. (\ref{eq:main_theorem_system_1}) from theorem \ref{th:solutions_general} as follows
\begin{align} \label{eq:Q_system}
    &
      Q_{a} \mathcal{H}^{ab} Q_{b}
    + 2 \J^a Q_a
    =
    \mathcal{M}^2,\qquad
    \beta^a{}_{\alpha} Q_a = j_\alpha,
\end{align}
where we introduce the following notations
\begin{subequations}
\begin{align} \label{eq:jm}
    \J^a &= 
    q\left(
          \frac{1}{2\chi_\tau} \mathcal{F}^{a} 
        - \mathcal{H}^{ab} p_{b}
    \right)
    ,\qquad  j_\alpha=q(p_a \beta^a{}_{\alpha} - f_\alpha/2), \\
     \mathcal{M}^2 &= m^2
    + q^2 p_a \left(
          \frac{1}{\chi_\tau} \mathcal{F}^{a}
        - \mathcal{H}^{ab} p_{b}
    \right).
\end{align}
\end{subequations}
The transition to matrix notation in theorem \ref{th:solutions_general} is achieved by the following identification
\begin{equation}
    \x=Q_a, \qquad A = \mathcal{H}^{ab}, \qquad B = \beta^a{}_\alpha, \qquad \w = \J^a,\qquad
    d = \mathcal{M}^2,\qquad
    \f = j_\alpha,  
\end{equation}
The primed system (\ref{eq:main_theorem_system_2}) can be represented as the Lie derivative of condition (\ref{eq:Q_system}):
\begin{equation}
    A' = \LCS_\gamma \mathcal{H}^{ab}, \qquad
    B' = \LCS_\gamma \beta^a{}_\alpha, \qquad
    \w' = \LCS_\gamma \J^a,\qquad
    d' = \LCS_\gamma \mathcal{M}^2,\qquad
    \f' = \LCS_\gamma j_\alpha.
\end{equation}

For maximum gluing $g = g_\text{max}$, any solution of system (\ref{eq:Q_system}) must also satisfy the primed system, and we can apply theorem \ref{th:solutions_general}, which will impose additional conditions on the surface geometry. A complete general analysis of an arbitrary glued particle surface using theorem \ref{th:solutions_general} is voluminous, since it depends on various factors, such as the kernels of the $A$ and $B$ matrices, their intersection, the relation between the vectors $\mathcal{J} ^a$ and these kernels, the $\mathcal{M}^2$ sign, etc. Therefore, in this paper we will consider only the simple case $\beta^a{}_\alpha = 0$ to demonstrate the applicability of the formalism, deferring a more complete analysis to future work.
~

\textbf{Example: Case $\beta^a{}_\alpha = 0$.}
The case of maximum gluing $g = g_\text{max}=n_K-1$ without the linear constraint $\beta^a{}_\alpha = 0$ offers a simpler analysis but yields fruitful results. First of all, we obtain that $f_\alpha = 0$, $B=B'=0$, $\f = \f'= 0$, and the pseudoinverse matrix $G$ can be set to zero. This follows from theorem \ref{th:solutions_general} that
\begin{align}
    A' &= \lambda A,\qquad
    \w' = \lambda \w,\qquad
     d' = \lambda d
\end{align}
without any additional constraints. Note that an arbitrary function $\lambda$ has a vector index because the primed quantities contain the derivative $\LCS_\tau$ along any direction tangent to the hypersurface. Hence, by expressing these equations in terms of the derivative, we obtain
\begin{subequations}\label{eq:Dcondition_all}
\begin{align}
    & \label{eq:Dcondition_a}
    \LCS_\tau \mathcal{H}^{ab} = \lambda_\tau \mathcal{H}^{ab},
    \\ & \label{eq:Dcondition_b}
    \LCS_\tau \J^a = \lambda_\tau \J^a,
    \\ & \label{eq:Dcondition_c}
    \LCS_\tau \mathcal{M}^2 = \lambda_\tau \mathcal{M}^2.
\end{align}
\end{subequations}
In the chargeless massive case, Eq. (\ref{eq:Dcondition_b}) holds automatically, from Eq. (\ref{eq:Dcondition_c}) follows $\lambda_\tau = 0$, and the only condition is $\LCS_\tau \mathcal{H}^{ab} = 0$. In the null case $q=m=0$, Eqs. (\ref{eq:Dcondition_b}) and (\ref{eq:Dcondition_c}) holds automatically, and the only condition is (\ref{eq:Dcondition_a}). It is worth mentioning with no detail that one can similarly apply theorem \ref{th:solutions_general} to the neutral case $q=0$ with arbitrary $\beta^a{}_\alpha$, getting the same result that either $\lambda_\tau = 0$ or $m = 0$. One can find from Eq. (\ref{eq:jm}) that ${\mathcal{M}^2 = m^2 + q p_a \left( \J^a + q\mathcal{F}^{a} / 2\chi_\tau \right)}$, which can be used to combine Eqs. (\ref{eq:Dcondition_all}) into new ones (we assume that the mass $m$ is constant)
\begin{subequations}\label{eq:Dcondition_alter}
\begin{align} \label{eq:Dcondition_b_a}
  q\mathcal{H}^{ab} \LCS_\tau  p_{b} &= q\LCS_\tau\left(
          \frac{1}{2\chi_\tau} \mathcal{F}^{a} \right) 
    - q\lambda_\tau\left(
          \frac{1}{2\chi_\tau} \mathcal{F}^{a} 
    \right), \\
q^2 \mathcal{F}^{a} \LCS_\tau p_a &= \lambda_\tau \chi_\tau m^2, \label{eq:Dcondition_c_a}
\end{align}
\end{subequations}
which can replace Eq. (\ref{eq:Dcondition_b}) and (\ref{eq:Dcondition_c}).

Generally, Eqs. (\ref{eq:Dcondition_all}) can be represented as $\LCS_\tau X = \lambda_\tau X$, where $X\in\{\mathcal{H}^{ab}, \J^a, \mathcal{M}^2\}$ is a scalar with respect to connection $\LCS_\tau$. Let us calculate the commutator $[\LCS_\tau, \LCS_\sigma]$ acting on any of these $X$'s. Since it is a scalar, the commutator must be zero $[\LCS_\tau, \LCS_\sigma] X = 0$. On the other hand, from Eqs. (\ref{eq:Dcondition_all}) follows
\begin{equation}
    0 = [\LCS_\tau, \LCS_\sigma] X = (\LCS_\tau \lambda_\sigma - \LCS_\sigma \lambda_\tau)X
    \qquad\Rightarrow\qquad 
    \LCS_\tau \lambda_\sigma - \LCS_\sigma \lambda_\tau = 0,
\end{equation}
which ensures that $\lambda_\tau$ can be expressed locally as 
\begin{equation}\label{eq:lambda_potential}
    \lambda_\tau = -\LCS_\tau \ln \Lambda,
\end{equation}
where $\Lambda$ is some positive function. In this case, Eqs. (\ref{eq:Dcondition_all}) can be represented as
\begin{subequations}\label{eq:Dcondition_all2}
\begin{align}
    & \label{eq:Dcondition_a2}
    \LCS_\tau (\Lambda \mathcal{H}^{ab}) = 0,
    \\ & \label{eq:Dcondition_b2}
    \LCS_\tau (\Lambda \J^a) = 0,
    \\ & \label{eq:Dcondition_c2}
    \LCS_\tau (\Lambda \mathcal{M}^2) = 0.
\end{align}
\end{subequations}
From Eq.(\ref{eq:Dcondition_c2}) follows that one can set $\Lambda=M^{-2}$ up to some multiplicative function that is constant on $\mathcal{S}_g$. This leaves us with only two first equations
\begin{align} \label{HJasC}
     \mathcal{H}^{ab} = M^{2} C^{ab},\quad
    \J^a =M^{2} C^a,
\end{align}
where $C^{ab},\,C^{a}$ are constant in $\mathcal{S}_g$. Eqs. (\ref{HJasC}) mean that the original quadratic equation (\ref{eq:Q_system}) is constant on the surface up to some arbitrary multiplicative function.
In the final section of the article, we show by a number of examples that in Kerr-like geometries all the discussed gluing conditions are indeed satisfied. But first, let's try to understand the deeper cause of the gluing phenomenon and its consequences.
 
\section{Relation to Killing tensors}
\label{sec:KT}

The existence of glued surfaces of massive particles in the Kerr metric is closely related to the existence of hidden spacetime symmetries described by the Killing tensor of the second rank.
For many physically relevant spacetimes, the Killing tensor of the second rank is slice-reducible with respect to some foliation, so that the Killing vectors are slice-tangent, $k_a^\alpha = \kappa_a^\alpha$ (see \cite{Kobialko:2022ozq} for details). According to ref. \cite{Kobialko:2022ozq}, the slice-reducible Killing tensor of the second rank (with not quite geodesic slices) has the form  
\begin{equation}\label{eq:sr_killing}
    \mathcal{K}_{\alpha\beta} = \alpha g_{\alpha\beta} + \gamma^{ab}\kappa_a{}_\alpha \kappa_b{}_\beta + e^\Psi n_\alpha n_\beta,
\end{equation}
where $\alpha$, $\gamma^{ab}$, $\Psi$ are functions obeying some differential equations presented in  Ref. \cite{Kobialko:2022ozq}. In particular, the functions $\alpha$ and $\gamma^{ab}$ are constant along hypersurfaces, i.e., $\LCS_\tau \alpha = 0$ and $\LCS_\tau \gamma^{ab} = 0$. 

A number of restrictions are also automatically applied to slices. In particular, they must be partially umbilical, i.e.,  the second fundamental form provides conditions (\ref{eq:condition_3}) and the mixed components are zero $\beta^a{}_\alpha = 0$ (see Eq. (21) in \cite{Kobialko:2022ozq}\footnote{The symbol $\beta^a_\alpha$ corresponds to $\mathcal{X}_i^\beta$ in the reference.}). Upon comparing Eq. (\ref{eq:HasG}) presented in this work and Eq. (34b) in Ref. \cite{Kobialko:2022ozq}, it is evident that in slices $\LCS_\tau \mathcal{H}^{ab} = \lambda_\tau \mathcal{H}^{ab}$ where $\lambda_\gamma = - \LCS_\gamma\ln\left(\chi_\tau\varphi^3\right)$ (or $\Lambda=\chi_\tau \varphi^3$ up to a multiplicative constant). 
Here $\varphi$ is the lapse function of the foliation defined as\footnote{In original equations from Ref. \cite{Kobialko:2022ozq} there is an additional $\epsilon$ which is equal to 1 for timelike hypersurfaces.}
\begin{equation}
    n^\alpha \nabla_\alpha n_\beta = -\LCS_\beta \ln \varphi.
\end{equation}
Thus, we can establish a connection between slice-reducible conformal Killing tensor and massive particle surfaces \cite{Kobialko:2022ozq}. In order to find out whether the slice is a massive particle surface, one can differentiate Eq. (\ref{eq:Q_system}) with respect to any direction tangent to the hypersurface and combine it with condition (\ref{eq:Dcondition_a}), which holds automatically for the slice. As a result, we obtain only the following remaining condition
\begin{align} \label{eq:jshy}
2  (\LCS_\tau\J^a -  \lambda_\tau \J^a) Q_a=\LCS_\tau \mathcal{M}^2 - \lambda_\tau   \mathcal{M}^2, 
\end{align}
for some $Q_a$ satisfying (\ref{eq:Q_system}). In particular, the maximum gluing order is achieved under conditions (\ref{eq:Dcondition_b}) and (\ref{eq:Dcondition_c}) or equivalently (\ref{eq:Dcondition_alter}) which resolve the condition (\ref{eq:jshy}). 

\begin{theorem}
Let a timelike foliation of the manifold $M$ generates slice-reducible conformal Killing tensor, i.e., satisfies all conditions of the theorem 3.5 in Ref. \cite{Kobialko:2022ozq}. If a slice $S$ has no boundary and satisfies (\ref{eq:Dcondition_alter}), it is an extended glued particle surface $\mathcal{S}_g$ of maximal order.
\end{theorem}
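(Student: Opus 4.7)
The plan is to combine two inputs—the slice-reducibility data guaranteed by Theorem 3.5 of Ref.~\cite{Kobialko:2022ozq}, and the two equations (\ref{eq:Dcondition_alter}) assumed in the hypothesis—and show that together they imply the full triplet (\ref{eq:Dcondition_a})--(\ref{eq:Dcondition_c}). Once this triplet is in hand, the integrability/exactness argument leading to Eqs. (\ref{HJasC}) produces a slice-constant quadric in $Q_a$ whose solution family has the maximal dimension $n_K-1$ in the $\beta^a{}_\alpha = 0$ regime, identifying $S$ as an extended glued particle surface of maximal order.

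First I would extract what the slice-reducibility hypothesis already gives on $S$. The form (\ref{eq:sr_killing}) of $\mathcal{K}_{\alpha\beta}$, subject to the differential constraints recalled in Sec.~\ref{sec:KT} and itemized in Ref.~\cite{Kobialko:2022ozq}, forces the slice to be partially umbilical in the shape (\ref{eq:condition_3}) with $\beta^a{}_\alpha = 0$, placing us squarely in the setting of the Example in Sec.~\ref{sec:GPS}. Direct comparison of Eq. (\ref{eq:HasG}) with Eq. (34b) of Ref.~\cite{Kobialko:2022ozq} then delivers $\LCS_\tau \mathcal{H}^{ab} = \lambda_\tau \mathcal{H}^{ab}$ with $\lambda_\tau = -\LCS_\tau \ln(\chi_\tau \varphi^3)$—this is exactly (\ref{eq:Dcondition_a}), and it pins down the potential $\Lambda = \chi_\tau \varphi^3$ that appears in (\ref{eq:lambda_potential}). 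The same reduction forces $f_\alpha = 0$, consistently with the pseudoinverse set-up $B = B' = 0$, $\f = \f' = 0$ used in the Example.

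Second, with $\lambda_\tau$ now fixed by the Killing-tensor side, I would verify that the assumed equations (\ref{eq:Dcondition_b_a})--(\ref{eq:Dcondition_c_a}) reproduce the remaining (\ref{eq:Dcondition_b}) and (\ref{eq:Dcondition_c}). This is essentially the inverse of the rearrangement already advertised beneath (\ref{eq:Dcondition_alter}): substitute the definitions (\ref{eq:jm}) of $\J^a$ and $\mathcal{M}^2$, distribute $\LCS_\tau$ by the Leibniz rule, feed (\ref{eq:Dcondition_a}) into the $\mathcal{H}^{ab} p_b$ terms, and use $\LCS_\tau m = 0$. With all three conditions (\ref{eq:Dcondition_all}) in force with a common $\lambda_\tau$, the commutator identity $[\LCS_\tau,\LCS_\sigma]X = 0$ on scalars implies $\lambda$ is closed, so (\ref{eq:lambda_potential}) holds locally and (\ref{HJasC}) follow. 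Because $\beta^a{}_\alpha = 0$ kills any linear constraint in (\ref{eq:Q_system}), the admissible $Q_a$ form a quadric of full dimension $g_{\max} = n_K - 1$, which is the definition of maximal gluing.

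The main obstacle is neither computational depth nor novel geometry, but careful bookkeeping: one has to track which scalars on $S$ are slice-constant versus slice-varying ($\chi_\tau$, $\varphi$, $p_a$ and $\mathcal{F}^a$ all vary along $S$, while $m$ and the $C^{ab}, C^a$ of (\ref{HJasC}) do not), and, more subtly, confirm that the $\lambda_\tau$ read off from the slice-reducibility side is identical to the one implicitly present in (\ref{eq:Dcondition_b_a})--(\ref{eq:Dcondition_c_a}). If in some situation the slice-reducibility conditions of Ref.~\cite{Kobialko:2022ozq} do not by themselves force $f_\alpha = 0$, that vanishing would have to be added as a separate hypothesis; with it assumed (as is the case under the conventions of the cited theorem), the chain above closes and yields the claim.
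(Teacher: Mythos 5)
Your proposal is correct and follows essentially the same route as the paper: slice-reducibility gives $\beta^a{}_\alpha=0$ and condition (\ref{eq:Dcondition_a}) automatically (with $\Lambda=\chi_\tau\varphi^3$), the hypothesis (\ref{eq:Dcondition_alter}) supplies the equivalent of (\ref{eq:Dcondition_b})--(\ref{eq:Dcondition_c}), and the Example of Sec.~\ref{sec:GPS} then yields maximal gluing of order $n_K-1$. The caveat you raise about $f_\alpha=0$ is handled the same way in the paper, where it follows from the $\beta^a{}_\alpha=0$ case of the linear constraint in (\ref{eq:condition_q}).
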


If the Killing tensor is exact rather than conformal, then we can use Eq. (46b) of Ref. \cite{Kobialko:2022ozq} resulting in $\lambda_\tau = 0$, which implies that $\mathcal{H}^{ab}$, $\mathcal{J}^a$ and $\mathcal{M}^2$ are constant along the hypersurface, or equivalently 
\begin{align}
  q\mathcal{H}^{ab} \LCS_\tau  p_{b} = q\LCS_\tau\left(
          \frac{1}{2\chi_\tau} \mathcal{F}^{a} \right), \quad
q^2 \mathcal{F}^{a} \LCS_\tau p_a =0.
\end{align}
In particular, for $q=0$ the slices are always glued particle surfaces. This generalizes theorems from Refs. \cite{Kobialko:2022ozq,Kobialko:2021aqg} that slices associated with a slice-reducible exact Killing tensor are fundamental photon surfaces to the case of massive particle surfaces with neutral particles. 

Now, we can analyze the situation in an opposite direction, i.e., whether the existence of foliation with slices representing glued massive particle surfaces results in the existence of an exact Killing tensor of rank two. If there is a foliation of glued massive particle surface of maximum order with $\beta^a{}_\alpha = 0$, Eq. (\ref{eq:Dcondition_a}) corresponds to one of the necessary conditions for the existence of slice-reducible Killing tensor from Ref. \cite{Kobialko:2022ozq} if the lapse function $\varphi$ of the foliation satisfies the condition $\lambda_\gamma = - \LCS_\gamma\ln\left(\chi_\tau\varphi^3\right)$ (compare with conditions (34) in Ref. \cite{Kobialko:2022ozq}). If it additionally meets another condition 
\begin{align} 
\LCS_\gamma \left( \varphi \chi_\tau - \varphi n^\alpha \LCM_{\alpha} \ln \varphi \right)
    = 0,
\end{align}
we immediately get the evidence of the existence of the slice-reducible conformal Killing tensor of rank two.

If the spacetime possesses slice-reducible conformal (or exact) Killing tensor, following Eqs. (24a) and (29) from Ref. \cite{Kobialko:2022ozq}, there are two differential equations
\begin{equation}
    n^\alpha \nabla_\alpha \Psi = 2\chi_\tau,\qquad
    n^\alpha \nabla_\alpha \left(e^\Psi \G^{ab}\right) = n^\alpha \nabla_\alpha \gamma^{ab}.
\end{equation}
These equations are useful to express $\mathcal{H}^{ab}$ in form (\ref{eq:HasG}) through the components of the Killing tensor
\begin{align}\label{eq:HasGamma}
    \mathcal{H}^{ab} =
    - \frac{e^{-\Psi}}{2\chi_\tau} n^\alpha  \nabla_\alpha \gamma^{ab}.
\end{align}

The presence of glued particle surfaces can serve as an indication of the existence of an exact Killing tensor. We will illustrate the potential of this representation for metrics such as Kerr and others in the following examples.

\section{Examples}
\label{sec:Examples}

\subsection{Kerr metric} 

The metric for Kerr solution in the Boyer-Lindquist coordinates reads
\begin{equation}\label{eq:kerr_form}
    ds^2 =
    - f(dt - \omega d\phi)^2
    + \frac{\Sigma}{\Delta}dr^2
    + \Sigma d\theta^2
    + \Delta f^{-1}\sin^2\theta d\phi^2,
\end{equation}
where
\begin{equation}
    f = \frac{\Delta - a^2\sin^2\theta}{\Sigma}, \qquad
    \Sigma = r^2 + a^2 \cos^2\theta,\qquad
    \omega=\frac{-2Mar\sin^2\theta}{\Delta-a^2\sin^2\theta},\qquad
    \Delta = r(r-2M) + a^2.
\end{equation}
The Kerr spacetime possesses two Killing vectors $k_{(t)}^\alpha\partial_\alpha=\partial_t$, $k_{(\phi)}^\alpha\partial_\alpha=\partial_\phi$, and an exact Killing tensor in the form (\ref{eq:sr_killing}) with
\begin{equation}\label{eq:kerr_killing_r}
    \alpha = -r^2,\qquad
    \gamma^{ab}\kappa_a^\alpha\kappa_b^\beta = -\Delta^{-1} S^\alpha S^\beta,\qquad
    S^\alpha = \left(r^2 + a^2\right) \delta^\alpha_t + a \delta^\alpha_\phi,\qquad
    e^\Psi = \Sigma,
\end{equation}
which is slice reducible with respect to slices $r=\text{const}$ with the normal unit vector ${n^\alpha = \sqrt{\Delta/\Sigma} \delta^\alpha_r}$. The slices are tangent to the Killing vectors $\partial_t$, $\partial_\phi$. According to the result obtained in Ref. \cite{Kobialko:2022ozq}, Sec. V.A, four-dimensional solutions with two commuting Killing vectors $n_K = n - 2$ have two different foliations orthogonal to each other that generate the same Killing tensor. The tensor (\ref{eq:kerr_killing_r}) is indeed slice-reducible with respect to $\theta=\text{const}$ as well, possessing another representation of the form (\ref{eq:sr_killing})
\begin{equation}\label{eq:kerr_killing_r_2}
    \alpha = -a^2\cos^2\theta,\qquad
    \gamma^{ab}\kappa_a^\alpha\kappa_b^\beta = S^\alpha S^\beta,\qquad
    S^\alpha = a\sin\theta \delta^\alpha_t + \delta^\alpha_\phi / \sin\theta,\qquad
    e^\Psi = \Sigma
\end{equation}
with ${n^\alpha = \sqrt{1/\Sigma} \delta^\alpha_\theta}$. As we have shown, slices associated with a slice-reducible Killing tensor are glued massive particle surfaces for neutral particles. Since the Maxwell form is trivial $A_\alpha = 0$ in the Kerr solution, the electric charge $q$ plays no role in the particles' motion. It motivates us to consider these two cases. In the first case, $r=\text{const}$, spacetime sections of the hypersurfaces are compact and the corresponding particles are bounded. In the second case $\theta=\text{const}$, spacetime sections are not compact, and as we will see, particles fly by the black hole in the scattering regime.

\subsubsection{Spheres \texorpdfstring{$r=\text{const}$}{r-const}}

From the $\theta\theta$-component we get $\chi_\tau$, and comparing the rest of $\chi_{\alpha\beta}$ with $h_{\alpha\beta}$, we get the representation (\ref{eq:condition_3}) of the second fundamental form with\footnote{We use the following index ordering: $t,\,\varphi$. }
\begin{subequations}
\begin{equation}
    \chi_\tau = \chi_{\theta\theta} / h_{\theta\theta} = 
    r \sqrt{\frac{\Delta}{\Sigma^3}},
\end{equation}
\begin{equation}
    \mathcal{H}^{ab} =
    \frac{1}{r \Delta^2}
    \begin{pmatrix}
        \left(a^2+r^2\right) \left(a^2 (M+r)+r^2 (r-3 M)\right)
        &
        a M \left(a^2-r^2\right)
        \\
        a M \left(a^2-r^2\right)
        &
        a^2 (M-r)
    \end{pmatrix}.
\end{equation}
\end{subequations}
One can check that Eq. (\ref{eq:HasGamma}) does hold. Matrix $\mathcal{H}^{ab}$ is explicitly independent on $\theta$, thus it provides the maximal gluing. Since the determinant of $\mathcal{H}^{ab}$ is negative
\begin{equation}
    \text{det} \mathcal{H}^{ab} = -\left(\frac{a}{r (r-2 M) + a^2}\right)^2,
\end{equation}
constraint (\ref{eq:condition_q}) describes a hyperbola. Constraint (\ref{eq:condition_q}) and the constraint on the vector norm $\dot{\gamma}^2 = -m^2$ explicitly read
\begin{subequations}
\begin{align} \label{eq:kerr_r}
    &
    \left(E (a^2 + r^2) - a L_z \right)
    \left(E a^2 (M+r) + E r^2 (r-3 M) + a L_z (r-M) \right)
    - r \Delta^2 m^2 = 0,
    \\
    &
      (r^2 + a^2) \left(\mathcal{Q}-(L_z-a E)^2 - r^2 E^2\right)
    - 2 M \mathcal{Q} r 
    + r^2 (L_z^2 + \Delta m^2)
    = 0,
\end{align}
\end{subequations}
where 
\begin{equation}
    E = -\dot{\gamma}^\alpha k_{(t)}{}_\alpha,\qquad
    L_z = \dot{\gamma}^\alpha k_{(\phi)}{}_\alpha,\qquad
    \mathcal{Q} = -\dot{\gamma}^\alpha \dot{\gamma}^\beta \mathcal{K}_{\alpha\beta},\qquad
    q_a = \{-E,L_z\}.
\end{equation}
There are two families of solutions $(E^+, L_z^+)$ and $(E^-, L_z^-)$ to Eqs. (\ref{eq:kerr_r}), where
\begin{subequations}
\begin{align}
    &
    E^\pm =
    \pm \frac{
        m^2 \left(r \left(a^2+2 r^2\right)-3 M r^2\right) + \mathcal{Q}(r - M)
    }{
        2 r \sqrt{\Delta \left(m^2 r^2+\mathcal{Q}\right)}
    },
    \\&
    L_z^\pm =
    \pm \frac{
          m^2 r \left(a^2 r (r-3 M)+a^4+M r^3\right)
        - \mathcal{Q} \left( r^2 (r - 3 M) + a^2 (r + M)\right) 
    }{
        2 a r \sqrt{\Delta \left(m^2 r^2+\mathcal{Q}\right)}
    }.
\end{align}
\end{subequations}
Here, the Carter constant plays the role of the parameter of the one-dimensional family of the solutions. Thus, the order of the glued particle surface is one. The region accessible for particles is determined by inequality (\ref{eq:accessible_surface}), which takes the following form for Kerr solution
\begin{equation} \label{eq:kerr_condition}
      \frac{f(L_z-E \omega)^2}{\Delta \sin^2\theta }
    - \frac{E^2}{f}
    \leq -m^2.
\end{equation}
The left-hand side diverges to $+\infty$ at the poles of the sphere $\theta = 0$ or $\pi$ if $L_z\neq0$. In this case, the massive particle surfaces represent belts cut out from the sphere. 

The resulting extended surfaces of maximal order for massive particles are constructed to include sets of spherical massive particle worldlines. In the case of chargeless particles, these surfaces contain spherical timelike geodesics discussed in Teo's paper \cite{Teo:2020sey}. Therefore, we have obtained the geometric definition of Teo's spheres. Our expressions coincide with Teo's results after shifting the Carter's constant $\mathcal{Q} = \mathcal{Q}_{Teo} + (L_z - a E)^2$. Teo has shown the existence of both stable and unstable orbits on the same sphere depending on the value of $Q$. The marginally stable orbit is determined by Carter's constant
\begin{align}
\mathcal{Q} = \mathcal{Q}_{MS} \equiv m^2 \frac{r^3 (a^2-M^2) \pm \sqrt{M\Delta^3  r^3}}{r (r-M)^2-\Delta  M}.
\end{align}
Thus, glued massive particle surfaces can contain particles with both stable and unstable worldlines. However, the stability of each individual surface in the family is well-defined. One can approach the problem of worldline stability on massive particle surfaces in a general manner by utilizing calculations similar to those described in Ref. \cite{Kobialko:2022uzj}. 

A different point of view on glued massive particle surfaces is also useful. For each value of the momentum $L_z$ at fixed energy $E$, there exists a unique massive particle surface, and the entire set of surfaces for all $L_z$ forms a region similar to the photon region  \cite{Grenzebach:2014fha,Grenzebach:2015oea,Galtsov:2019bty,Galtsov:2019fzq}. The structure of this region is the key for describing the shadow of a given black hole in a stream of massive charged particles on a given energy scale in future works.

\subsubsection{Cones \texorpdfstring{$\theta = \text{const}$}{theta-const}}

Similarly, for cones with constant $\theta$ we get 
\begin{equation}
    \chi_\tau = \chi_{rr} / h_{rr} = 
    -\frac{a^2 \sin (2\theta )}{2\Sigma^{3/2}},\qquad
    \mathcal{H}^{ab} =
    \begin{pmatrix}
        1 & 0 \\
        0 & -1/a^2\sin^4\theta
    \end{pmatrix}.
\end{equation}
One can check that Eq. (\ref{eq:HasGamma}) does hold. Conditions on the integrals of motion
\begin{subequations}
\begin{align}
    &
    \frac{L_z^2}{a^2\sin^4\theta}-E^2+m^2 = 0,
    \\ & 
      \frac{ \mathcal{Q}' }{a^2 \cos^2\theta}
    + \frac{ L_z^2 }{a^2 \sin^2\theta}
    - E^2
    + m^2 = 0
\end{align}
\end{subequations}
have the following solution
\begin{align}
    L_z^2 = \mathcal{Q}' \tan^4\theta,\qquad
    E^2 = m^2 + \frac{\mathcal{Q}'}{a^2\cos^4\theta}.
\end{align}
where $\mathcal{Q}' = \mathcal{Q} + (L_z-a E)^2$ is a modified Carter constant. As a result, we find that $E^2\geq m^2$. The condition for the accessible region has the same form (\ref{eq:kerr_condition}).
It is noteworthy that $\mathcal{H}^{ab}$ is nonexistent for $a=0$ due to the fact that, as $a$ approaches 0, all worldlines on the cone degenerate into radial geodesics without any gluing. By construction, all resulting $\theta = \text{const}$ massive particles cones contain conical orbits which are generally non-compact, corresponding to the minimum of the effective potential in the equations for $\theta$ from the geodesic equations \cite{Teo:2020sey}. One can expect the existence of compact conical orbits in the charged case in Kerr-Newman dyons (see Ref. \cite{Hackmann:2013pva} for an example of compact orbits between two closely spaced cones). While these non-compact orbits may not hold the same level of significance as compact spherical orbits in the formation of gravitational shadows or in the context of Penrose inequalities, they offer possibilities for investigating particle scattering problems.

From the expression of the second derivative of the effective potential for $u=\cos\theta$ (see Ref. \cite{Teo:2020sey})
\begin{align}
V''(u) = - 8 a^2 (E^2- m^2) u^2\leq0
\end{align}
follows that all massive particle surfaces $\theta = \text{const} \neq \pi/2$ are stable. In contrast to spherical surfaces, this property remains unaffected by Carter's constant $\mathcal{Q}$.

\subsection{Zipoy-Voorhees} 
Zipoy-Voorhees solution is a vacuum non-spherical spacetime with the following metric 
\begin{align}
    ds^2 = -f^\delta dt^2 + k^2 f^{-\delta} \Bigg(
        &\left(\frac{x^2 - 1}{x^2 - y^2}\right)^{\delta^2}(x^2-y^2)\left(\frac{dx^2}{x^2-1} + \frac{dy^2}{1-y^2}\right)
        +\\\nonumber &
        + (x^2-1)(y^2-1) d\phi^2
    \Bigg),\qquad
    f = \frac{x-1}{x+1},
\end{align}
with $x > 1$ and $y^2 \leq 1$. Zipoy-Voorhees is known to lack photon surfaces, but it has fundamental photon surfaces. Thus, we expect that there are massive particle surfaces, which are not glued. We take surfaces $x = \text{const}$ as an ansatz to investigate whether this spacetime allows for spherical massive particle surfaces. Following a similar approach to previous examples, we obtain
\begin{equation}\label{eq:zv_H}
    \mathcal{H}^{ab} = 
    \frac{1}{
        F_1
    }
    \left(
        \begin{array}{cc}
         f^{-\delta } (F_1 - \delta(x^2-y^2)) & 0 \\
         0 & \cfrac{\left(1-\delta ^2\right) x f^{\delta }}{k^2 \left(x^2-1\right)} \\
        \end{array}
    \right),
\end{equation}
where
\begin{equation}
    F_1 = x^3-\delta  x^2+x \left(\delta ^2 \left(1-y^2\right)-1\right)+\delta  y^2.
\end{equation}
Substituting (\ref{eq:zv_H}) and integrals of motion $Q_a = q_a = \{-E, L_z\}$ into Eq. (\ref{eq:condition_q}), we find that the expression (which is quite voluminous) does not depend on $y$ if
\begin{equation}
    L_z^2 = \frac{E^2 k^2 \left(x^2-1\right)^2 f^{-2\delta }}{1-\delta  x}.
\end{equation}
The final solution is
\begin{align}
    E^2 = m^2 f^{\delta} \frac{1-\delta  x}{2-\delta  x},\qquad
    L_z^2 = k^2 m^2 f^{-\delta} \frac{\left(x^2-1\right)^2 }{2-\delta  x}.
\end{align}
Condition (\ref{eq:accessible_surface}) reads
\begin{equation}
    \frac{m^2 (x^2-y^2)}{\left(1-y^2\right) (2-\delta  x)} \leq 0.
\end{equation}
Along with this condition, we require non-negativeness of $E^2$ and $L_z^2$. However, this is not possible for timelike geodesics, but it is possible for tachyons $m^2 < 0$ only.

\subsection{Kerr-Newman dyon}
Dyonic Kerr-Newman solution is an electrovacuum solution with electric $\chQ$ and magnetic $\chP$ charges. Its metric has the form (\ref{eq:kerr_form}) with redefined functions
\begin{equation}
    \Delta = r(r-2M) + a^2 + \chSq,\qquad
    \omega=\frac{-2a\sin^2\theta}{\Delta-a^2\sin^2\theta}\left(Mr - \chSq/2\right)
\end{equation}
and electromagnetic vector potential
\begin{equation}
    A_\alpha dx^\alpha = 
      \frac{\chQ r + a \chP \cos\theta}{\Sigma} dt
    - \frac{\chP \left(a^2+r^2\right) \cos\theta + a \chQ r \sin^2\theta}{\Sigma} d\phi,
\end{equation}
where $\chSq = \chQ^2 + \chP^2$. We can easily generalize our formalism to particles both with  electric $q$ and magnetic $p$ charges as follows. First, we notice if the particle possesses a magnetic charge, the equation of motion has the form of Eq. (\ref{eq_particles}) with the following substitution
\begin{equation} 
   F^\beta{}_{\lambda} \to \tilde{F}^\beta{}_{\lambda} =  F^\beta{}_{\lambda} + \frac{p}{q} (\star F)^\beta{}_{\lambda},\qquad
   (\star F)_{\mu\nu} = \frac{1}{2} \epsilon_{\mu\nu\alpha\beta} F^{\alpha\beta},
\end{equation} 
where $\epsilon_{\mu\nu\alpha\beta}$ is the Levi-Civita tensor. Since tensor $F_{\alpha\beta}$ fulfills both Maxwell and Bianchi equations, its dual tensor $(\star F)_{\alpha\beta}$ can be represented with a vector potential $(\star F)_{\alpha\beta} = \partial_\alpha A^\star_\beta - \partial_\beta A^\star_\alpha$ as well. Thus, we can make the following substitution of the vector-potential in our analysis
\begin{equation}
      A_\alpha \to \tilde{A}_{\alpha} =  A_\alpha + \frac{p}{q} A_\alpha^\star.
\end{equation}
A detailed analysis of the motion of dyonic particles in the dyonic Kerr-Newman solution is presented in Ref. \cite{Hackmann:2013pva}. Due to the dual nature of dyons, the transformed potential $\tilde{A}_{\alpha}$ is the same as $A_{\alpha}$ up to substitution
\begin{equation}
    \chQ \to \chQ + \chP p/q,\qquad
    \chP \to \chP - \chQ p/q.
\end{equation}
Performing the same steps as for previous examples, and making use of Eqs. (\ref{eq:jm}), we have the following quantities for spheres $r=\text{const}$
\begin{subequations}
\begin{equation}
    \mathcal{H}^{ab} =
    \frac{1}{r \Delta^2}
    \begin{pmatrix}
        \left(a^2+r^2\right) \left[2 r \Delta + \left(a^2+r^2\right) (M-r)\right]
        &
        a\left[a^2 M - r \left(M r-\chSq\right)\right]
        \\
        a\left[a^2 M - r \left(M r-\chSq\right)\right]
        &
        a^2 (M-r)
    \end{pmatrix},
\end{equation}
\begin{equation}
    \mathcal{J}^{a} =
    - \frac{(\chP p + \chQ q)}{2r \Delta^2}
    \begin{pmatrix}
        r^4 - 4 M r^3 + (2 a^2 + 3 \chSq) r^2 + a^2 (a^2 + \chSq)
        \\
        a \left(a^2+\chSq-r^2\right)
    \end{pmatrix},
\end{equation}
\begin{equation}
    \mathcal{M}^2 =
    m^2 + \frac{(\chP p+\chQ q)^2(r^2 - M r - \Delta)}{\Delta^2},
\end{equation}
\end{subequations}
and for cones $\theta=\text{const}$
\begin{subequations}
\begin{equation}
    \mathcal{H}^{ab} =
    \begin{pmatrix}
        1
        &
        0
        \\
        0
        &
        -1/a^2\sin^4\theta
    \end{pmatrix},
\end{equation}
\begin{equation}
    \mathcal{J}^{a} =
    \frac{\chQ p - \chP q}{2 a^2 \cos\theta \sin^4\theta}
    \begin{pmatrix}
        a \sin^4\theta
        \\
        1 + \cos^2\theta
    \end{pmatrix},
\end{equation}
\begin{equation}
    \mathcal{M}^2 =
    m^2 + \left(
        \frac{\chQ p - \chP q}{a\sin^2\theta}
    \right)^2.
\end{equation}
\end{subequations}
All quantities are constant along the surface, so spheres and cones achieve maximal gluing. It is notable that particle charges come only in the combination $\chP p+\chQ q$ for spheres, and $\chQ p - \chP q$ for cones. Making the first or the second combination equal to zero (i.e., $p = -q\chQ/\chP$ or $p = q\chP/\chQ$), particles on the corresponding surfaces will not be affected by the electromagnetic field, since the electric and magnetic forces balance each other out. Particularly, if $p=\chP=0$, conical surfaces are not affected, because the electromagnetic field exerts only the radial electric force, which has no influence on the motion normal to the cones. Similarly, if $p = \chQ = 0$, spheres are not affected, because the electric force is absent, and the magnetic force is tangent to the surface.

\section{Discussion and Conclusions}

We have generalized the concept of photon surfaces and fundamental photon surfaces to the case of massive charged particles or photons in the medium \cite{Perlick:2017fio,Asenjo:2010zz} in general spacetimes with at least one isometry, among which are stationary ones. We introduced the notion of massive particle surfaces as timelike hypersurfaces such that any worldline of a particle with mass $m$, electric charge $q$, and a set of conserved quantities $Q_a$, remains tangent to the surface forever, provided that the worldline initially was tangent to it. This concept encompasses the generalizations for both ordinary photon spheres and all their standard generalizations \cite{Yoshino:2017gqv,Yoshino:2019dty}.  In general, such surfaces have boundaries, but the notion can be naturally extended to surfaces without boundaries, which can be useful for analyzing their topological properties and integrability problems.
In particular, for the analysis of areal constraints, Riemannian Penrose inequalities, one can try to apply the standard approach of reverse flows of mean curvature \cite{Yoshino:2017gqv}.

The new concepts introduced here represent a natural extension of our prior research, as presented in Refs. \cite{Kobialko:2022uzj,Kobialko:2020vqf}. In particular, we have proved the key theorem \ref{th:theorem}, which is a direct analog of theorem 2.2 obtained in Ref. \cite{Claudel:2000yi} for photon surfaces, and establishes a purely geometric definition of such surfaces. We have found that in the general case the second fundamental form $\SFS_{\alpha\beta}$ is partially umbilical. The condition of the partially umbilical surface becomes degenerate when at least $n-2$ Killing vectors are involved, resulting in the fixation of the function $\chi_\tau$ without any additional conditions. Also, we have obtained restrictions on the electromagnetic field tensor and a compact master equation.

We have also found that in spacetimes with hidden symmetries (Kerr-like spacetimes) described by slice-reducible conformal Killing fields of rank two, massive particle surfaces can be found among the slices of the corresponding foliation. Moreover, such surfaces are  glued together in a non-trivial way. We have defined the general concept of glued massive particle surfaces of arbitrary order. Glued surfaces capture worldlines from some continuous family of conserved quantities of dimension larger than or equal to one. We have reinforced these concepts with a number of examples demonstrating that the previously known features of the particle movement along worldlines are indeed successfully explained by the geometric properties of the glued surfaces. We have investigated the relationship between the gluing phenomenon and the formation of a slice-reducible conformal Killing tensor. Particularly, we have found that the existence of a foliation of glued massive particle surfaces of maximum order can serve as a good indication of the existence of the Killing tensor, and vice versa. However, a rigorous substantiation of this correspondence remains a part of further research. Despite this, we can already retrieve one of the necessary and sufficient integrability conditions for the existence of conformal Killing tensors.

If an exact Killing tensor of rank two is present  in spacetime, there is a quantity $\mathcal{H}^{ab}$ extracted from the second fundamental form, which is constant along the surface. If massive particle surfaces allow for charged particles, there are two more constant quantities, $\mathcal{J}^a$, and $\mathcal{M}^2$. Together they can be understood as invariants of the massive particle surface.

A number of examples are called to demonstrate the usefulness of the concept. We have considered the problem of the existence of massive particle surfaces in Kerr, Zipoy-Voorhees and Kerr-Newman metrics. The obtained results can be generalized to particles coupled to background fields in various theories by replacing of $m^2$ and $A_\alpha$ with the corresponding expressions. The obtained results have potential applications in various areas, including Penrose inequalities, uniqueness theorems, integrability theory, and the description of massive particle dynamics in the presence of real supermassive objects, as well as for further photons in plasma.

\begin{acknowledgments}
This work was supported by Russian Science Foundation under Contract No. 23-22-00424.
\end{acknowledgments}

\appendix
\section{Quadratic forms with linear constraints}\label{sec:appendix_theorems}

Here we prove several useful theorems about quadratic forms with linear constraints.

\begin{theorem}\label{th:pre_theorem}
Let $\mathbf{Y} = \{\y\}$ be a non-empty set of all solutions for the following quadratic equation
\begin{equation}\label{eq:pre_theorem_system_1}
    \y^T Q \y = d,
\end{equation}
and similarly $\mathbf{Y}' = \{\y'\}$ is a set of all solutions for
\begin{equation}\label{eq:pre_theorem_system_2}
    \y^T Q' \y = d',
\end{equation}
where  $Q, Q'$ are symmetric non-degenerate $n_Q\times n_Q$-matrices. Then the set of all solutions for the first quadratic equation (\ref{eq:pre_theorem_system_1}) is contained in the set of solutions for the second quadratic equation (\ref{eq:pre_theorem_system_2}), $\mathbf{Y} \subseteq \mathbf{Y}'$, if and only if the primed quantities for the second system can be represented as 
\begin{align}  \label{eq:pre_theorem_decomposition}
    Q' &= \lambda Q,\qquad
    d' = \lambda d,
\end{align}
where $\lambda$ is an arbitrary scalar\footnote{Including  $\lambda=0$. In case $\lambda\neq0$ we find that $\mathbf{Y} =\mathbf{Y}'$.}.
\end{theorem}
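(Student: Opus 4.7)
The backward implication is immediate by substitution: if $Q' = \lambda Q$ and $d' = \lambda d$, then every $\y \in \mathbf{Y}$ satisfies $\y^T Q' \y = \lambda \y^T Q \y = \lambda d = d'$, hence $\y \in \mathbf{Y}'$. The forward implication is the real content, and my plan is to push everything through a tangent-space argument at an arbitrary point of $\mathbf{Y}$.

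Fix $\y_0 \in \mathbf{Y}$. Since $Q$ is non-degenerate, $Q\y_0 \neq 0$ whenever $\y_0 \neq 0$, so the quadric $\mathbf{Y}$ is a smooth hypersurface in a neighborhood of $\y_0$ with tangent space $T_{\y_0}\mathbf{Y} = \{v : v^T Q \y_0 = 0\}$. Differentiating the identity $\y(t)^T Q' \y(t) = d'$ along any smooth curve in $\mathbf{Y}$ through $\y_0$ with velocity $v$ yields $v^T Q' \y_0 = 0$. Thus the hyperplane $\{v : v^T Q \y_0 = 0\}$ is contained in $\{v : v^T Q' \y_0 = 0\}$, which forces $Q' \y_0 = \mu(\y_0)\, Q \y_0$ for some scalar $\mu(\y_0)$. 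Pairing with $\y_0$ gives $d' = \mu(\y_0)\, d$.

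If $d \neq 0$, then $\mu(\y_0) = d'/d$ is a constant $\lambda$, so $(Q' - \lambda Q)\y = 0$ for every $\y \in \mathbf{Y}$. The remaining step is to show that $\mathbf{Y}$ linearly spans $\mathbb{R}^{n_Q}$, which promotes this to $Q' = \lambda Q$ and gives $d' = \lambda d$ automatically. The spanning is obtained as follows: for any $w$ with $w^T Q w \neq 0$, the quadratic $(\y_0 + t w)^T Q (\y_0 + t w) = d$ has a nontrivial root $t = -2\, w^T Q \y_0 / w^T Q w$, so $\y_0 + t w \in \mathbf{Y}$ and therefore $tw \in \mathrm{span}(\mathbf{Y})$; a generic vector $w$ with $w^T Q w = 0$ is then decomposed as $w = (w+v) - v$ with $v$ chosen so that both $v^T Q v$ and $(w+v)^T Q(w+v)$ are nonzero, which is possible because $Q$ is non-degenerate and therefore $v^T Q v$ does not vanish identically.

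If $d = 0$, scale-invariance of the cone $\mathbf{Y} \subseteq \mathbf{Y}'$ forces $c^2 \y_0^T Q' \y_0 = d'$ for all $c \in \mathbb{R}$, hence $d' = 0$; the pairing step then fails to fix $\mu$ directly, and I would prove constancy of $\mu$ by a polarization argument. Namely, whenever $\y_1, \y_2 \in \mathbf{Y}$ satisfy $\y_1^T Q \y_2 = 0$ (so that $\y_1 + \y_2 \in \mathbf{Y}$), comparing $Q'(\y_1+\y_2) = \mu(\y_1+\y_2)Q(\y_1+\y_2)$ with $Q'\y_1 + Q'\y_2$ yields $\mu(\y_1) = \mu(\y_2)$ whenever $Q\y_1, Q\y_2$ are linearly independent, and connectedness of the non-degenerate real cone in dimension $n_Q \geq 2$ then propagates this to a global constant $\lambda$. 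The main obstacle I anticipate is precisely this $d = 0$ case, together with cleanly handling the degenerate subcase in which $Q$ is definite and $\mathbf{Y}$ reduces to a point, which must either be excluded in line with the paper's applications or stated as a trivial exception.
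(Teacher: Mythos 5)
Your backward direction is the same one-line substitution as the paper's. For the forward direction you take a genuinely different route: the paper fixes a $Q$-orthonormal basis and pushes explicit two-parameter trigonometric families of solutions through the second quadric, reading off that $Q'$ must be diagonal and proportional to $Q$; you instead differentiate along the quadric to obtain the pointwise relation $Q'\y_0=\mu(\y_0)\,Q\y_0$ and then try to promote $\mu$ to a constant. For $d\neq 0$ your argument is essentially sound and arguably cleaner: pairing with $\y_0$ gives $\mu=d'/d$ at once, and spanning finishes the job. (Your spanning step has a small, repairable hole: when $w^TQ\y_0=0$ your root is $t=0$ and yields nothing; but the set of $w$ with $w^TQw\neq0$ and $w^TQ\y_0\neq0$ is open and contains $\y_0$, hence spans $\mathbb{R}^{n_Q}$, which closes it.) You also correctly observe that when $Q$ is definite and $d=0$, so that $\mathbf{Y}=\{0\}$, the theorem as stated is false and this case must be excluded --- a point the paper's proof passes over silently.

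The genuine gap is the case $d=0$ with $Q$ indefinite, which is exactly the case the paper relies on for photon surfaces (the null cone of the Lorentzian induced metric in Eq.~(\ref{eq:kkv_1}), and the reduced quadric inside the proof of theorem \ref{th:solutions} when $d-\f^TG\f=0$). Your polarization step needs pairs $\y_1,\y_2\in\mathbf{Y}$ with $\y_1^TQ\y_2=0$ and $Q\y_1$, $Q\y_2$ linearly independent. In Lorentzian signature $(n_Q-1,1)$ no such pairs exist: two mutually orthogonal null vectors are necessarily proportional, so $Q\y_1$ and $Q\y_2$ are never independent and the argument produces no relation between $\mu(\y_1)$ and $\mu(\y_2)$ on distinct generators of the cone. (The mechanism would work for signatures with at least two pluses and two minuses, but not for the case of interest; and the appeal to connectedness of the punctured cone also fails for $n_Q=2$, where it is four rays.) So for $d=0$ you are left with $d'=0$ and $Q'\y=\mu(\y)Q\y$ with no proof that $\mu$ is constant; a different mechanism is needed here --- for instance the paper's explicit rotation families with $a=\pm b$ arbitrary, or a direct argument that two non-degenerate indefinite forms with the same real null cone are proportional. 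Since you explicitly defer this case, the proposal as written does not yet establish the theorem in the regime where the paper actually applies it.
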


\begin{proof}

\textit{Sufficient $(\Leftarrow)$}: Let $\y$ be an arbitrary solution of the first quadratic equation (\ref{eq:pre_theorem_system_1}). Then, conditions (\ref{eq:pre_theorem_decomposition}) give
\begin{equation}
    \y^T Q' \y = \lambda \y^T Q \y=\lambda d =d',
\end{equation}
i.e., exactly the second quadratic equation (\ref{eq:pre_theorem_system_2}). 

\textit{Necessary $(\Rightarrow)$}: Let all solutions $\y$ for the first quadratic equation (\ref{eq:pre_theorem_system_1}) be contained in the set of solutions for the second quadratic equation (\ref{eq:pre_theorem_system_2}). Let us introduce an orthonormal basis for $Q$ in the following way
\begin{equation}
   \e^T_i Q \e_j=-\delta_{ij}, \quad \e^T_{i'} Q \e_{j'}=\delta_{i'j'}, \quad \e^T_{i} Q \e_{i'}=0. 
\end{equation}
Consider a subset of the solutions of the system (\ref{eq:pre_theorem_system_1}) defined as follows:
\begin{equation}
 \y = a (\e_i \cos \psi +\e_j \sin \psi)  + b (\e_{i'}\cos \psi'+\e_{j'} \sin \psi'), \quad -a^2+b^2 =d.
\end{equation}
Writing quadratic form $Q'$ in this orthonormal basis $Q'_{ij}\equiv \e^T_i Q' \e_j$, Eq. (\ref{eq:pre_theorem_system_2}) for chosen $\y$ gives
\begin{align}
    &a^2 \left(\frac{Q'_{ii}+Q'_{jj}}{2}+ Q'_{ij}\sin (2 \psi) +\frac{Q'_{ii}-Q'_{jj}}{2} \cos (2 \psi)\right)\nonumber\\&+b^2 \left(\frac{Q'_{i'i'}+Q'_{j'j'}}{2}+ Q'_{i'j'}\sin (2 \psi') +\frac{Q'_{i'i'}-Q'_{j'j'}}{2} \cos (2 \psi')\right)\nonumber\\&+2 ab \left(Q'_{ii'}\cos \psi \cos \psi'+Q'_{ij'}\cos \psi \sin \psi'+Q'_{ji'}\sin \psi  \cos \psi' +Q'_{jj'}\sin \psi \sin \psi'\right) = d' \nonumber.
\end{align}
So from the arbitrariness of $\psi$ and $\psi'$
\begin{align}
a^2 Q'_{ii}+b^2 Q'_{i'i'}=d', \quad
Q'_{ii}=Q'_{jj}, \quad  Q'_{i'i'}=Q'_{j'j'}, \quad Q'_{ij}=Q'_{i'j'}=Q'_{ii'}=0.
\end{align}
Eliminating $a$ from the first equation, we get
\begin{align}
b^2 (Q'_{i'i'}+Q'_{ii})=d' + d Q'_{ii}, 
\end{align}
From the arbitrariness of $b$ follows
\begin{align}
-Q'_{ii}=Q'_{i'i'}= d'/d\equiv \lambda,
\end{align}
in orthonormal basis for $Q$. So $Q'$ must be proportional to $Q$ with the same proportionality factor as $d'$ and $d$:
\begin{equation}
    Q' = \lambda Q, \quad d'=\lambda d.
\end{equation}
\end{proof}

\begin{theorem}\label{th:solutions}
Let $\mathbf{X} = \{\x\}$ be a non-empty set of all solutions for the following quadratic equation with a system of linear constraints
\begin{equation}\label{eq:theorem_system_1}
    \x^T A \x = d,\qquad B \x = \f,
\end{equation}
and similarly $\mathbf{X}' = \{\x'\}$ is a set of all solutions for
\begin{equation}\label{eq:theorem_system_2}
    \x^T A' \x + 2 \w'^T \x = d',\qquad B' \x = \f',
\end{equation}
where  $A, A'$ are symmetric $n_A\times n_A$-matrices, $B, B'$ are $n_B \times n_A$-matrices and $n_B$-vector $\f$ is in the image of $B$, and $A$ is non-degenerate. If all solutions $\mathbf{X}$ projected onto $\text{ker}B$ with respect to the pseudo-inner product $A$ span the entire $\text{ker}B$, then the following statement is fair. The entire set of solutions for the first system (\ref{eq:theorem_system_1}) is contained in the set of solutions for the second system (\ref{eq:theorem_system_2}), $\mathbf{X} \subseteq \mathbf{X}'$, if and only if the primed quantities for the second system can be represented as 
\begin{subequations} \label{eq:theorem_decomposition}
\begin{align}
    A' &= \lambda A+ S_1 B + (S_1 B)^T+ B^T S_2 B,\\
        B'& = S_3 B, \\  
      \f' &= S_3 \f,
\end{align}
\end{subequations}
where $\lambda$ is an arbitrary scalar, matrices $S_1, S_2, S_3$ are arbitrary matrices constrained by the following conditions
\begin{align}\label{eq:theorem_constrain}
     \f^T S_2 \f = d' - \lambda d  - 2 \f^T G B A^{-1} \w',\quad
      (\mathds{1} - B^T G B A^{-1}) \w' + S_1 \f = 0, \quad
      B A^{-1} S_1 = 0,
\end{align}
where $G$ is a pseudo-inverse matrix of $G$ defined from the following operator equation $(B A^{-1} B^T) G B = B$.
\end{theorem}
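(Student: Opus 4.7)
\medskip\noindent\textbf{Proof plan.}
The plan is to verify the two directions separately, reducing the necessary direction to Theorem~\ref{th:pre_theorem} applied on the subspace $\mathrm{ker}\,B$.

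For sufficiency I would substitute (\ref{eq:theorem_decomposition}) directly. The linear condition is immediate: $B'\x = S_3 B\x = S_3 \f = \f'$. For the quadratic, expanding $\x^T A'\x + 2\w'^T\x$ and using $B\x = \f$ gives $\lambda d + \f^T S_2 \f + 2\x^T(S_1\f + \w')$; the constraint $(\mathds{1} - B^T G B A^{-1})\w' + S_1\f = 0$ rearranges to $S_1\f + \w' = B^T G B A^{-1}\w'$ and converts the $\x$-dependent term into the constant $2\f^T G B A^{-1}\w'$, after which the scalar constraint $\f^T S_2 \f = d' - \lambda d - 2\f^T G B A^{-1}\w'$ completes the reduction to $d'$.

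For necessity I would fix a distinguished base solution. Set $\x_0 = A^{-1} B^T G \f$; the defining relation $(BA^{-1}B^T)GB = B$ applied to any pre-image of $\f$ yields $B\x_0 = (BA^{-1}B^T)G\f = \f$, while by construction $\x_0 \in A^{-1}\mathrm{range}(B^T) = (\mathrm{ker}\,B)^{\perp_A}$. Parametrize $\x = \x_0 + \y$ with $\y \in \mathrm{ker}\,B$; the $A$-orthogonality of $\x_0$ kills the cross term, so the admissible $\y$ are precisely those with $\y^T A \y = d - \x_0^T A \x_0 =: \tilde d$, and by hypothesis these span $\mathrm{ker}\,B$ (which implicitly requires $A|_{\mathrm{ker}\,B}$ to be non-degenerate). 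Crucially, if $\y$ is admissible so is $-\y$. Substituting $\x = \x_0 \pm \y$ into $B'\x = \f'$ and subtracting forces $B'\y = 0$ on every admissible $\y$, hence on all of $\mathrm{ker}\,B$; this yields $B' = S_3 B$ and, evaluated at $\x_0$, $\f' = S_3 \f$. The same parity trick applied to the quadratic relation splits $\y^T A'\y + 2(A'\x_0 + \w')^T \y = \tilde d'$ (with $\tilde d' := d' - \x_0^T A'\x_0 - 2\w'^T\x_0$) into a linear-in-$\y$ identity that forces $A'\x_0 + \w' \in \mathrm{range}(B^T)$ and a quadratic identity $\y^T A'\y = \tilde d'$ on the spanning set.

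The quadratic identity is exactly the setting of Theorem~\ref{th:pre_theorem} on the non-degenerate subspace $\mathrm{ker}\,B$, producing the scalar $\lambda$ with $A'|_{\mathrm{ker}\,B \times \mathrm{ker}\,B} = \lambda\, A|_{\mathrm{ker}\,B \times \mathrm{ker}\,B}$ and $\tilde d' = \lambda \tilde d$. The tensor $A' - \lambda A$ then vanishes on the $\mathrm{ker}\,B$-block, and the splitting $V = \mathrm{ker}\,B \oplus A^{-1}\mathrm{range}(B^T)$ lets one write it as $S_1 B + (S_1 B)^T + B^T S_2 B$, with $S_1$ pinned down by the gauge $B A^{-1} S_1 = 0$ absorbing the intrinsic ambiguity $S_1 \mapsto S_1 + B^T R$, $S_2 \mapsto S_2 - (R + R^T)$. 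Substituting this decomposition back into $A'\x_0 + \w' \in \mathrm{range}(B^T)$ and into $\tilde d' = \lambda \tilde d$, and using $\x_0 = A^{-1} B^T G \f$, produces exactly the two remaining conditions in (\ref{eq:theorem_constrain}). The main obstacle I expect is keeping every manipulation of the pseudo-inverse $G$ insensitive to its ambiguity when $B A^{-1} B^T$ is singular --- each intermediate identity must live in the correct range so that the choice of pseudo-inverse drops out.
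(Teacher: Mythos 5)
Your proposal is correct and follows essentially the same route as the paper's proof: direct substitution for sufficiency, and for necessity the particular solution $\z=A^{-1}B^TG\f$, the decomposition $\x=\z+\y$ with $\y\in\mathrm{ker}\,B$, the parity argument in $\y$ combined with the spanning hypothesis, and reduction of the residual quadratic identity on $\mathrm{ker}\,B$ to Theorem \ref{th:pre_theorem}. The only (cosmetic) difference is that the paper posits the decomposition of $A'$ and $B'$ into block pieces $S_0,S_1,S_B,S_3,S_4$ first and then constrains the pieces, whereas you derive the block conditions on $A'$ and $B'$ directly and decompose afterwards; your remarks on the gauge freedom in $S_1,S_2$ and on the pseudo-inverse ambiguity are consistent with the paper's treatment.
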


\begin{proof}

\textit{Sufficient $(\Leftarrow)$}: For an arbitrary solution $\x$ of system (\ref{eq:theorem_system_1}), conditions (\ref{eq:theorem_decomposition}) and (\ref{eq:theorem_constrain}) from the theorem give 
\begin{align}
  \x^T A' \x + 2\w'^T \x &=\lambda \x^T A\x+ \x^T S_1 B\x + \x^T (S_1 B)^T\x+ \x^T B^T S_2 B \x + 2\w'^T \x=\\\nonumber
  &=\lambda d + 2\x^T S_1 \f + \f^T S_2 \f + 2\w'^T \x =\\\nonumber & =
   d' +  2\x^T S_1 \f  - 2 \f^T G B A^{-1} \w' - 2\f^T S_1^T \x + 2\w'^T A^{-1} B^T G B \x
  =\\\nonumber & =
   d'  - 2 \f^T G B A^{-1} \w' + 2\w'^T A^{-1} B^T G \f
  =d', 
\end{align}
and 
\begin{equation}
 B' \x =S_3 B \x =S_3 \f= \f'. 
\end{equation}
Thus, every solution of the first system (\ref{eq:theorem_system_1}) is a solution for the second system (\ref{eq:theorem_system_2}).

\textit{Necessary $(\Rightarrow)$}: For simplicity we define the spaces $U_\x$ and $U_\f$, so
\begin{align}
&
    \w', \x \in U_\x,\qquad \f \in U_\f,
\\\nonumber &
    A, A': U_\x \mapsto U_\x,\qquad
    B, B': U_\x \mapsto U_\f.
\end{align}
We can understand matrix $A$ as a metric tensor in $U_\x$. It defines the pseudo-inner products $\x_1^T A \x_2$. We will understand orthogonality as $\x_1^T A \x_2 = 0$. Further, the subspace defined by the kernel of $B$ will play an important role. We will define the kernel of matrix $B$ and the orthogonal complement of the kernel $(\text{ker}B)^\perp$ as
\begin{equation}
    \text{ker}B=\{\y: B\y = 0\},\qquad 
    (\text{ker}B)^\perp = \{\x: \x^T A \y = 0, \forall\y\in\text{ker}B\}.
\end{equation}
We define a pseudo-inverse matrix $G$ from the following operator equation 
\begin{equation}\label{eq:G_def}
    (B A^{-1} B^T) G B = B.
\end{equation}
This matrix is useful to solve linear equations with matrix $B$. For example, since $\f$ is in the image of $B$, we have the following identity $(B A^{-1} B^T) G \f = \f$. One can define $G$ as an inverse matrix $(B A^{-1} B^T)^{-1}$ if the latter exists. Also, we can introduce a projector at the subspace $\text{ker}B$ in the following way
\begin{equation}\label{eq:projector_pi}
    \Pi = \mathds{1} - A^{-1} B^T G B.
\end{equation}
Indeed, it is a projector, since it preserves the vector from the kernel $\Pi \y = \y$, and any vector $\x \in (\text{ker}B)^\perp$ orthogonal to the kernel is zero 
\begin{equation}
    \Pi (A^{-1} B^T \bm{t}) = A^{-1} B^T \bm{t} - A^{-1} (BA^{-1} B^T G B)^T \bm{t} = A^{-1} B^T \bm{t} - A^{-1} B^T \bm{t} = 0,
\end{equation}
where $(A^{-1} B^T \bm{t})$ can represent any vector from $(\text{ker}B)^\perp$. To show this, we notice that $A^{-1}$ is a full-rank matrix, therefore the image of the operator $A^{-1} B^T$ has the same dimension as the orthogonal complement of the kernel of $B$
\begin{equation}
    \text{rank}(A^{-1} B^T) = \text{rank}B = \text{dim}(\text{ker}B)^\perp,
\end{equation}
and the result is orthogonal to any vector $\y\in\text{ker}B$
\begin{equation}
    \y^T A A^{-1} B^T \bm{t} = (B\y)^T \bm{t} = 0.
\end{equation}
Thus, the image of $(A^{-1} B^T)$ is $(\text{ker}B)^\perp$.

First, vector $\x$ and matrices $A'$, $B'$ can be split into terms that are spanned onto the kernel $\text{ker}B \subseteq U_\x$ and its orthogonal complement $\left(\text{ker}B\right)^\perp\subseteq U_\x$:
\begin{align}\label{eq:AB_representation}
    \x = \z + \y,\qquad
    A' = S_0 + S_1 B + (S_1 B)^T + B^T S_B B,\qquad
    B' = S_4 + S_3 B,
\end{align}
and we have imposed the following conditions
\begin{align}\label{eq:S_conditions}
    B \y=0,\qquad
    \z^T A \y = 0,\qquad
    S_0 A^{-1} B^T = B A^{-1} S_0 = 0, \quad B A^{-1} S_1 = 0,\qquad
    S_4 A^{-1} B^T = 0.
\end{align}
Matrices $S_0$ and $S_B$ are symmetric. Constraints (\ref{eq:S_conditions}) are orthogonality conditions with respect to the aforementioned pseudo-inner product. The new linear operators act as follows:
\begin{align}
&
    S_0: U_\x \mapsto U_\x,\qquad
    S_1: U_\x \mapsto U_\f,\qquad
    S_4: U_\x \mapsto U_\f,\qquad
    S_B, S_3: U_\f \mapsto U_\f.
\end{align}

Substituting the definition of $\x$ into the linear system of Eqs. (\ref{eq:theorem_system_1}), we get 
\begin{subequations} \label{eq:solution_A}
\begin{equation}\label{eq:solution_A_1}
   B \x = B\z = \f
   \qquad\Rightarrow\qquad
   \z=A^{-1} B^T G\f.
\end{equation}
The term $A^{-1} B^T G \f$ can be understood as a particular solution to the linear system $B\x=\f$, and $\y$ is a solution to the homogeneous part of the linear system $B\x=0$.  Substituting the definition of $\x$ and $\z$ into the quadratic equation in (\ref{eq:theorem_system_1}), we get
\begin{equation}\label{eq:solution_A_2}
    \x^T A \x = \f^T G \f + \y^T A \y = d
    \qquad\Rightarrow\qquad
    \y^T A \y = d-\f^T G \f,
\end{equation}
\end{subequations}
The vector $\y$ is in $\text{ker} B$, thus we can use the projected operator $A_\text{pr}$ in the quadratic condition
\begin{equation}\label{eq:theorem_condition_5_2}
    \y^T A_\text{pr} \y = d - \f^T G \f,
\end{equation}
where $A_\text{pr} = \Pi^T A \Pi = A - B^T G B$ is a projected operator $A$ onto $\text{ker}B$. The operator $A_\text{pr}$ has the following properties: 
\begin{itemize}
    \item $A_\text{pr} \y = A \y$ for all $\y\in\text{ker}B$. Hence, $A_\text{pr}$ acts on $\y$ the same way as $A$, and the quadratic form $\y^T A_\text{pr} \y$ is identical to $\y^T A \y$ in $\text{ker}B$.
    \item $A_\text{pr} A^{-1}B^T = B A^{-1} A_\text{pr} = 0$, so $A_\text{pr}$ satisfies all conditions on $S_0$ in Eq. (\ref{eq:S_conditions}).
\end{itemize}
The Eq. (\ref{eq:theorem_condition_5_2}) may have no solutions at all, which is not the case for the theorem.

Substituting $\x$ into the system (\ref{eq:theorem_system_2}) and taking into account the solution for $\z$, we get a pair of conditions
\begin{subequations}\label{eq:theorem_system_B_2}
\begin{equation}
      \f^T S_B \f
    + 2 \y^T S_1 \f
    + \y^T S_0 \y
    + 2 \f^T G B A^{-1} \w'
    + 2 \w'^T \y
    = d',
\end{equation}
\begin{equation}
    S_4 \y + S_3 \f - \f' = 0.
\end{equation}
\end{subequations}
Our current aim is to find the form of the matrices $S_0$, $S_1$, $S_3$, $S_4$ and $S_B$  that turns equations (\ref{eq:theorem_system_B_2}) into an identity for all $\y$ satisfying (\ref{eq:solution_A_2}). Since (\ref{eq:solution_A_2}) is an even equation with respect to $\y$, Eqs. (\ref{eq:theorem_system_B_2}) can be split by the parity with respect to $\y$:
\begin{subequations}\label{eq:theorem_system_B_3}
\begin{equation}\label{eq:theorem_system_B_3_1}
      \y^T S_0 \y
    = d' - \f^T S_B \f - 2 \f^T G B A^{-1} \w',
\end{equation}
\begin{equation}\label{eq:theorem_system_B_3_3}
    \y^T \left(S_1 \f + \w'\right)= 0,
\end{equation}
\begin{equation}\label{eq:theorem_system_B_3_2}
    \f' = S_3 \f,
\end{equation}
\begin{equation}\label{eq:theorem_system_B_3_4}
    S_4 \y = 0.
\end{equation}
\end{subequations}
The matrix $S_3$ can be arbitrary. 
By the condition of the theorem, the set of all $\y$ satisfying condition (\ref{eq:solution_A_2}) span the entire $\text{ker} B$. Thus, Eqs. (\ref{eq:theorem_system_B_3_3}), (\ref{eq:theorem_system_B_3_4}) can be rewritten as
\begin{equation}
    S_4 \Pi = 0,\qquad
    \Pi^T (\w' + S_1 \f) = 0.
\end{equation}
Making use of conditions (\ref{eq:S_conditions}), we get
\begin{equation}
    S_4 = 0,\qquad
    (\mathds{1} - B^T G B A^{-1}) \w' + S_1 \f = 0.
\end{equation}
The remaining condition is related with $\text{ker}B$ only:
\begin{equation}\label{eq:theorem_condition_5_1}
       \y^T S_0 \y
    = d' - 2 \f^T G B A^{-1} \w' - \f^T S_B \f,
\end{equation}
for all $\y$ satisfying (\ref{eq:theorem_condition_5_2}). The problem (\ref{eq:theorem_condition_5_1}) is solved in the simpler version of the theorem. According to theorem \ref{th:pre_theorem}, we have
\begin{equation}
    S_0 = \lambda A_\text{pr},\qquad
    d' - 2 \f^T G B A^{-1} \w' - \f^T S_B \f = \lambda (d - \f^T G \f),
\end{equation}
where $\lambda$ is an arbitrary scalar. 
The first condition defines the matrix $S_0$, and the second condition can be represented as
\begin{equation}
    S_B = S_2 + \lambda G,
\end{equation}
where $S_2: U_\f \mapsto U_\f$ is an arbitrary linear operator such that $\f^T S_2 \f = d' - \lambda d - 2 \f^T G B A^{-1} \w'$.
Combining all together, the second system must have the following form
\begin{align}
    A' &= \lambda A + S_1 B + (S_1 B)^T+ B^T S_2 B,\\
        B'& = S_3 B, \\  
      \f' &= S_3 \f,
\end{align}
where $\lambda$ is an arbitrary scalar, $\w'$ and $S_{1,2,3}$ are arbitrary matrices satisfying the following conditions
\begin{align}
     \f^T S_2 \f = d' - \lambda d  - 2 \f^T G B A^{-1} \w',\quad
      (\mathds{1} - B^T G B A^{-1}) \w' + S_1 \f = 0, \quad
      B A^{-1} S_1 = 0.
\end{align}

\end{proof}

\begin{theorem}\label{th:solutions_general}
Let $\mathbf{X} = \{\x\}$ be a non-empty set of all solutions for the following quadratic equation with a system of linear constraints
\begin{equation}\label{eq:main_theorem_system_1}
    \x^T A \x + 2 \w^T \x = d,\qquad B \x = \f,
\end{equation}
and similarly $\mathbf{X}' = \{\x'\}$ is a set of all solutions for
\begin{equation}\label{eq:main_theorem_system_2}
    \x^T A' \x + 2 \w'^T \x = d',\qquad B' \x = \f',
\end{equation}
where  $A, A'$ are symmetric non-degenerate $n_A\times n_A$-matrices, $B, B'$ are $n_B \times n_A$-matrices and $n_B$-vector $\f$ is in the image of $B$. If all solutions $\mathbf{X}$ projected to the $\text{ker}B$ with respect to the pseudo-inner product $A$ span the entire $\text{ker}B$, then the following statement is fair. The entire set of solutions for the first system (\ref{eq:theorem_system_1}) is contained in the set of solutions for the second system (\ref{eq:theorem_system_2}), $\mathbf{X} \subseteq \mathbf{X}'$, if and only if the primed quantities for the second system can be represented as 
\begin{subequations} \label{eq:main_theorem_decomposition}
\begin{align}
    A' &= \lambda A+ S_1 B + (S_1 B)^T+ B^T S_2 B,\\
        B'& = S_3 B, \\  
      \f' &= S_3 \f,
\end{align}
\end{subequations}
where $\lambda$ is an arbitrary scalar, matrices $S_1, S_2, S_3$ are arbitrary matrices constrained by the following condition
\begin{align}\label{eq:main_theorem_conditions}
    &
     d' + \left( 2 \w'^T - \w^T A^{-1} A' \right) A^{-1} \w
     - \lambda (d + \w^T A^{-1} \w)
     = \\\nonumber & =
     (\f + B A^{-1} \w)^T S_2 (\f + B A^{-1} \w) + 2 (\f + B A^{-1} \w)^T G B A^{-1} (\w' - A' A^{-1} \w )
    \\\nonumber &
    (\mathds{1} - B^T G B A^{-1}) \w' = \left(\lambda + B^T (S_1^T + S_2 B) A^{-1} \right) \w - S_1 \f,
    \\\nonumber &
    B A^{-1} S_1 = 0.
\end{align}
\end{theorem}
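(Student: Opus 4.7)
The plan is to reduce Theorem \ref{th:solutions_general} to the already-proven Theorem \ref{th:solutions} by an affine change of variables that eliminates the linear term of the unprimed system. Since $A$ is non-degenerate, the substitution $\x = \y - A^{-1}\w$ completes the square and turns the quadratic equation in (\ref{eq:main_theorem_system_1}) into $\y^T A \y = \tilde{d}$ with $\tilde{d} = d + \w^T A^{-1}\w$, while the linear constraint becomes $B\y = \tilde{\f}$ with $\tilde{\f} = \f + B A^{-1}\w$. Under the same shift, the primed system (\ref{eq:main_theorem_system_2}) takes the form $\y^T A' \y + 2\tilde{\w}'^T\y = \tilde{d}'$ and $B'\y = \tilde{\f}'$, where I define $\tilde{\w}' = \w' - A' A^{-1}\w$, $\tilde{\f}' = \f' + B' A^{-1}\w$, and $\tilde{d}' = d' + 2\w'^T A^{-1}\w - \w^T A^{-1}A' A^{-1}\w$.

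Next I would verify that the hypotheses of Theorem \ref{th:solutions} transfer to the shifted systems. The key observation is that the $A$-projector $\Pi = \mathds{1} - A^{-1}B^T G B$ onto $\text{ker}B$ depends only on $A$ and $B$ and not on $\w$; thus the assumption that projections of the solutions span $\text{ker}B$ is invariant under the shift, because $\Pi \x$ and $\Pi \y$ differ only by the fixed element $\Pi A^{-1}\w\in\text{ker}B$, preserving the affine (hence linear) span of the quadric.

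Applying Theorem \ref{th:solutions} to the pair $(\tilde{\,}\,)$ then yields immediately the structural relations $A' = \lambda A + S_1 B + (S_1 B)^T + B^T S_2 B$, $B' = S_3 B$, and $\tilde{\f}' = S_3 \tilde{\f}$, the latter unwinding to $\f' = S_3 \f$ once $B' A^{-1}\w = S_3 B A^{-1}\w$ is subtracted from both sides. It also delivers the constraints $BA^{-1}S_1 = 0$, $(\mathds{1} - B^T G B A^{-1})\tilde{\w}' + S_1\tilde{\f} = 0$, and $\tilde{\f}^T S_2 \tilde{\f} = \tilde{d}' - \lambda \tilde{d} - 2\tilde{\f}^T G B A^{-1}\tilde{\w}'$.

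The remaining work, and the main obstacle, is to rewrite these two nontrivial constraints back in the original variables and match them to (\ref{eq:main_theorem_conditions}). Substituting the tilded quantities into the scalar constraint and cancelling the terms of the form $\w^T A^{-1}\w$ produces precisely the first line of (\ref{eq:main_theorem_conditions}). For the vector constraint, I would expand $(\mathds{1} - B^T G B A^{-1}) A' A^{-1}\w$ using the assumed form of $A'$, exploiting the two identities $(\mathds{1} - B^T G B A^{-1}) B^T = 0$ (a transpose of the defining relation $(BA^{-1}B^T)GB = B$) and $BA^{-1}S_1 = 0$; this allows the $\lambda\w$ term to be reassembled together with the $B^T(S_1^T + S_2 B)A^{-1}\w$ combination on the right-hand side of the claimed identity. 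The delicate point is that $G$ is only a pseudo-inverse, so the argument must rely solely on its defining property and its transpose, as in the proof of Theorem \ref{th:solutions}, rather than on $BA^{-1}B^T$ being invertible. The sufficient direction $(\Leftarrow)$ is then a direct verification mirroring the one performed in Theorem \ref{th:solutions}, obtained by substituting an arbitrary solution of (\ref{eq:main_theorem_system_1}) into (\ref{eq:main_theorem_system_2}) and using the three constraints (\ref{eq:main_theorem_conditions}) to collapse the result to an identity.
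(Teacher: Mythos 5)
Your proposal is correct and is essentially the paper's own proof: the paper reduces Theorem \ref{th:solutions_general} to Theorem \ref{th:solutions} by exactly the shift $\x \to \x - A^{-1}\w$ and simply asserts that a ``careful substitution'' yields (\ref{eq:main_theorem_decomposition}) and (\ref{eq:main_theorem_conditions}), whereas you spell out the tilded quantities, the invariance of the spanning hypothesis under the shift, and the back-substitution. One caution on your final step: applying $(\mathds{1}-B^T G B A^{-1})B^T=0$ and $BA^{-1}S_1=0$ actually collapses the vector constraint to $(\mathds{1}-B^T G B A^{-1})\w'=\lambda(\mathds{1}-B^T G B A^{-1})\w-S_1\f$, which coincides with the printed second line of (\ref{eq:main_theorem_conditions}) only after the right-hand side there is likewise projected by $(\mathds{1}-B^T G B A^{-1})$ --- the two differ by a term in the image of $B^T$, a defect of the paper's printed formula rather than of your argument.
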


\begin{proof}
Since $A$ is non-degenerate, it is always possible to make a transformation translating the system (\ref{eq:main_theorem_system_1}) into (\ref{eq:theorem_system_1}) from theorem \ref{th:theorem}. Such a transformation is
\begin{equation}
    \x \to \x - A^{-1} \w.
\end{equation}
After a careful substitution and identification of the translated quantities with original from theorem \ref{th:theorem}, we immediately obtain Eq. (\ref{eq:main_theorem_decomposition}) with conditions (\ref{eq:main_theorem_conditions}).

\end{proof}

\end{document}